\newtheorem{theorem}{Theorem}
\newtheorem{lemma}[theorem]{Lemma}
\newtheorem{corollary}[theorem]{Corollary}
\newtheorem{remark}{Remark}
\newtheorem{definition}{Definition}
\newenvironment{proof}{\noindent{\bf Proof.}}
{\hspace*{\fill}$\Box$\par\vspace{4mm}}
\newcommand{\esterna}{f^{\infty}}
\newcommand{\uuu}{}
\newcommand{\uu}{}
\newcommand{\dd}{\overrightarrow}
\newcommand{\head} {\text{head}}
\newcommand{\Right} {\text{Right}}
\newcommand{\Left} {\text{Left}}
\newcommand{\PIPPOMaiuscolo} {Algorithm \texttt{ImplicitPaths}\xspace}
\newcommand{\PIPPO} {algorithm \texttt{ImplicitPaths}\xspace}
\begin{document}

\author{Lorenzo Balzotti\footnote{Dipartimento di Scienze di Base e Applicate per l’Ingegneria, Sapienza Universit\`a di Roma, Via Antonio Scarpa, 16, 00161 Roma, Italy. \texttt{lorenzo.balzotti@sbai.uniroma1.it}.}
\and
{Paolo G. Franciosa\footnote{Dipartimento di Scienze Statistiche, Sapienza Universit\`a di Roma, p.le Aldo Moro 5, 00185 Roma, Italy. \texttt{paolo.franciosa@uniroma1.it}}}}

\title{Computing Lengths of Non-Crossing Shortest Paths in Planar Graphs}

\date{}	
\maketitle

\begin{abstract} 
Given a plane undirected graph $G$ with non-negative edge weights and a set of $k$ terminal pairs
on the external face, it is shown in Takahashi \emph{et al.} (Algorithmica, 16, 1996, pp. 339–357) that
the union $U$ of $k$ non-crossing shortest paths joining the $k$ terminal pairs (if they exist) can be computed in $O(n\log n)$ time, where $n$ is the number of vertices of $G$. In the restricted case in which the union $U$ of the shortest paths is a forest, it is also shown that their lengths can be computed in the same time bound. We show in this paper that it is always possible to compute the lengths of $k$ non-crossing shortest paths joining the $k$ terminal pairs in linear time, once the shortest paths union $U$ has been computed, also in the case $U$ contains cycles.

Moreover, each shortest path $\pi$ can be listed in $O(\max\{\ell, \ell \log\frac{k}{\ell} \})$,
where $\ell$ is the number of edges in $\pi$.

As a consequence, the problem of computing non-crossing shortest paths and their lengths in a plane undirected weighted graph can be solved in $O(n\log k)$ time in the general case.
\end{abstract}

\textbf{keywords}: shortest paths, planar undirected graphs, non-crossing paths

\section{Introduction}
\label{section:introduction}
The problem of finding non-crossing shortest paths in a plane graph (i.e., a planar graph with a fixed embedding) has its primary applications in VLSI layout~\cite{bhatt-leighton}, where two paths are \emph{non-crossing} if they do not cross each other in the chosen embedding. It also appears as a basic step in the computation of maximum flow in a planar network and related problems~\cite{ausiello-balzotti,ausiello-franciosa_1,balzotti-franciosa_3,reif}. The problem can be formalized as follows: given an undirected plane graph $G$ with non-negative edge lengths and $k$ terminal pairs that lie
on a specified face boundary, find $k$ non-crossing shortest paths in $G$, each connecting a terminal pair. It is assumed that terminal pairs appear in the external face so that non-crossing paths exist, this property can be easily verified in linear time.

Takahashi \emph{et al.}~\cite{giappo2} proposed an algorithm able to compute $k$ non-crossing shortest paths that requires $O(n\log n)$ time, where $n$ is the size of $G$. In the same paper it is also analyzed the case where the terminal pairs lie on two different face boundaries, and this case is reduced to the previous one within the same computational complexity. The complexity of their solution can be reduced to $O(n\log k)$ by plugging in the linear time algorithm by Henzinger \emph{et al.}~\cite{henzinger} for computing a shortest path tree in a planar graph. %The result in~\cite{giappo2} was improved by Steiger in $O(n\log\log k)$ time [31], exploiting the algorithm by Italiano \emph{et al.}~\cite{italiano}\todob{MEGLIO eliminare Steiger...}. 
In the unweighted case, by using the result of Eisenstat and Klein~\cite{eisenstat-klein}, Balzotti and Franciosa shown that $k$ non-crossing shortest paths can be found in $O(n)$ time~\cite{balzotti-franciosa_2}.

The algorithm proposed in~\cite{giappo2} first computes the union of the $k$ shortest paths, which is claimed to be a forest. The second step relies on a data structure due to Gabow and Tarjan~\cite{gabow-tarjan} for efficiently solving least common ancestor (LCA) queries in a forest, in order to obtain distances  between the terminal pairs in $O(n)$ time.

Actually, the union of the $k$ shortest paths may in general contain cycles. An instance is shown in Figure~\ref{fig:errore_giappo}, in which the unique set of $k$ shortest paths  contains a cycle, hence the distances between terminal pairs cannot always be computed by solving LCA queries in a forest. This limitation was noted first in~\cite{polishchuk-mitchell}.

\begin{figure}[h]
\centering
\begin{overpic}[width=4cm,percent]{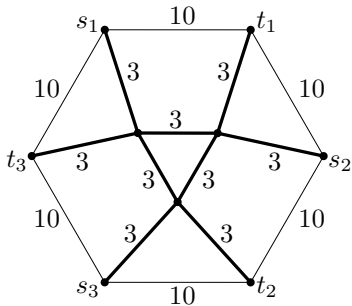}

\put(16,86){$s_1$}
\put(76,86){$t_1$}

\put(100,40){$s_2$}
\put(76,-2){$t_2$}

\put(16,-2){$s_3$}
\put(-7,40){$t_3$}

%esterni
\put(33,69){$3$}
\put(63,69){$3$}
\put(80,39){$3$}
\put(64,15){$3$}
\put(32,15){$3$}
\put(16,39){$3$}

%interni
\put(47,53){$3$}
\put(58,33){$3$}
\put(38,33){$3$}

\put(47,87.5){$10$}
\put(90,63){$10$}
\put(90,20){$10$}
\put(47,-5.5){$10$}
\put(2,20){$10$}
\put(2,63){$10$}

\end{overpic}
\caption{in this example the union of shortest paths from $s_i$ to $t_i$, for $i=1,2,3$, contains a cycle (the union is highlighted with bold edges).} 
\label{fig:errore_giappo}
\end{figure}

%\medskip
In this paper we solve the problem in the general case, showing an algorithm that given a plane graph $U$ with non-negative edge weights and $k$ terminal pairs lying on the external face computes the $k$ distances between terminal pairs in time proportional to the dimension of $U$, provided that $U$ is the union of the shortest paths. In this way, our novel simple technique does not require the result of Gabow and Tarjan~\cite{gabow-tarjan}. Moreover, we also propose an algorithm for listing the sequence of edges in a shortest path $\pi$ joining a terminal pair in $O(\max\{\ell,\ell\log(\frac{k}{\ell})\})$, where $\ell$ is the number of edges in $\pi$. The listing algorithm can be applied to any terminal pair.

Our algorithm can replace the second step in the solution proposed in~\cite{giappo2}, obtaining an  $O(n \log k)$ time algorithm that solves any possible instance of the non-crossing shortest paths problem in a plane graph. 

The algorithm we propose can be easily implemented and it does not require sophisticated data structures. We follow the same approach of Polishchuck and Mitchell~\cite{polishchuk-mitchell}, that was inspired by Papadopoulou's work~\cite{papadopoulou}. Their paper solves the problem of finding $k$ non-crossing shortest paths in a polygon with $n$ vertices, where distances are defined according to the Euclidean metric. As in~\cite{polishchuk-mitchell,giappo2}, we first build the \emph{genealogy tree} $T_g$, that describes the order in which terminal pairs appear on the external face. Then we find, one at a time,  the shortest paths joining terminal pairs, according to a postorder visit of $T_g$.

\paragraph*{Our approach}
The main novelty in our solution is the definition of \emph{shortcuts}, that are portions of the boundary of a face that allow us to modify a path without increasing (and possibly decreasing) its length.  We show that it is possible to establish whether a path is a shortest path by looking at the presence of shortcuts. This is the main theoretical result of this paper: while being a shortest path is a global property,  we can check it locally by looking at a single face at a time for the presence of shortcuts adjacent to the path, ignoring the rest of the graph. Notice that this is only possible when the input graph is the union of non-crossing shortest paths, not for general plane graphs.

Using this result, we can introduce the \PIPPO that computes an implicit representation of non-crossing shortest paths. %consisting in marking by a function \texttt{Mark} all darts (for convenience, we consider each shortest path as a directed path) that appear in one or more of the directed shortest paths computed by the algorithm. 
 This implicit representation is necessary to compute all distances between terminal pairs in linear time and to solve the problem of listing the edges of a single shortest path. This problem has already been discussed in the geometrical case~\cite{papadopoulou,polishchuk-mitchell} with Euclidean distances, but it is new in non-crossing paths in a plane weighted graph which have a more general structure.
 
Shortest paths computed by \PIPPO fulfill the \emph{single-touch} property, i.e., the intersection of any pair of paths is itself a path. The single-touch property implies that non-crossing shortest single-touch paths are non-crossing for every planar embedding of $U$, provided that a non-crossing solution exists. For this reason, we can say that the solution found by our algorithm holds for any feasible planar embedding of the graph.

\paragraph*{Structure}
The paper is organized as follows: in Section~\ref{section:preliminaries} we give preliminary definitions and notations that will be used in the whole paper. In Section~\ref{section:shortcuts} we deal with shortcuts and in Section~\ref{section:PIPPO_and_correctness} we introduce \PIPPO that gives an implicit representation of the shortest paths. In Section~\ref{section:computational_complexity_PIPPO} we deal with its computational complexity and in Section~\ref{section:compute_distances_and_listing_paths} we use its output to compute distances and list non-crossing paths. Finally, in Section~\ref{section:conclusions} conclusions are given and we mention some open problems.

\section{Preliminaries}
\label{section:preliminaries}

General definitions and notations are given. Then we deal with paths and non-crossing paths. We define a partial order on terminal pairs, the \emph{genealogy tree}, and we introduce \emph{path-sets}, that are sets of non-crossing shortest paths. %Finally, we prove some properties of non-crossing directed paths with respect to the genealogy tree.
All graphs in this paper are undirected.

\subsection{Definitions and notations}
\label{section:definitions_and_notations}

Let $G=(V(G),E(G))$ be a plane graph, i.e., a planar graph with a fixed planar embedding. We denote by $\mathcal{F}_G$ the set of faces and by $f^\infty_G$ its external face. When no confusion arises we use the term face to denote both the border cycle and the finite region bounded by the border cycle, and the external face is simply denoted by $f^\infty$.
 
%Even if we solve the multi-terminal shortest path problem on undirected graphs, we will make use of both undirected and directed graphs in our proofs.  Unless otherwise specified, statements hold for both undirected and directed graphs.

We use standard union and intersection operators on graphs. 

\begin{definition}
Given two graphs $G=(V(G),E(G))$ and $H=(V(H),$ $E(H))$, we define the following operations and relations:
\begin{itemize}\itemsep0em
\item $G\cup H=(V(G)\cup V(H),E(G)\cup E(H))$,
\item $G\cap H=(V(G)\cap V(H),E(G)\cap E(H))$,
\item $H\subseteq G\Longleftrightarrow V(H)\subseteq V(G)$ $\wedge$ $E(H)\subseteq E(G)$,
\item $G\setminus H=(V(G),E(G)\setminus E(H))$.
\end{itemize}
\end{definition}

%\noindent
Given a graph $G=(V(G),E(G))$, an edge $e$ and a vertex $v$ we write, for short, $e\in G$ in place of $e\in E(G)$ and $v\in G$ in place of $v\in V(G)$.

We denote by $uv$ the edge whose endpoints are $u$ and $v$.
%
%Let $G=(V(G),E(G))$ be an undirected graph, we denote by $\ddd{G}=(V(G),D(\ddd{G}))$ the directed version of $G$, where there is a pair  $\dd{ab}$, $\dd{ba}$ in $D(\ddd{G})$ for each $ab\in E(G)$.
%\todo{buttare tutto cio' che e' directed e non si usa piu'}Let $P=(V(P),D(P))$ be a directed graph, we denote by $\uuu{P}=(V(P),E(\uuu{P}))$ the undirected version of $P$ such that $ab\in E(\uuu{P})$ if and only if $\dd{ab}\in D(P)$ and/or $\dd{ba}\in D(P)$. Moreover, we say that $P\subseteq G$ if $P\subseteq\ddd{G}$.
%
For every vertex $v\in V(G)$ we define the \emph{degree of $v$} as $deg(v)=|\{e\in E(G) \,|\, \text{$v$ is an endpoint}$ $\text{of $e$}\}|$.

A graph $p$ is a \emph{path} from $a$ to $b$ if the set of edges in $p$ forms a sequence $av_1,v_1v_2,\ldots,v_{r-1}v_r,$ $v_rb$; we also say $p$ is an \emph{$ab$ path}. % clearly, if $p$ is a directed path, then $p$ starts in $a$ and it ends in $b$. 
A path is \emph{simple} if each vertex has degree at most two. A \emph{cycle} is a path from $a$ to $a$, and it is a \emph{simple cycle} if all vertices have degree two.

We use round brackets to denote ordered sets. For example, $\{a,b,c\}=\{c,a,b\}$ and $(a,b,c)\neq (c,a,b)$. Moreover, for every $k\in\mathbb{N}$ we denote by $[k]$ the set $\{1,\ldots,k\}$.

Let $w:E(G)\rightarrow \mathbb{R}_{>0}$ be a weight function on edges. The weight function is extended to a subgraph $H$ of $G$ so that $w(H)=\sum_{e\in E(H)} w(e)$. W.l.o.g., we assume that $w$ is strictly positive, indeed if an edge has zero weight, then we can delete it and join its extremal vertices. In case a zero weight cycle $C$ exists, then all edges internal to $C$ are not relevant for determining shortest paths, and can be contracted in a single vertex.

We assume that the input of our problem is a plane undirected graph $U=\bigcup_{i\in[k]}\uu{p_i}$, where $p_i$ is a shortest $s_it_i$ path in a planar graph $G$, and the terminal pairs $\{(s_i,t_i)\}_{i\in[k]}$ lie on the external face $f^\infty$ of $U$. We stress that we work with a fixed embedding of $U$. W.l.o.g. we assume that $U$ is connected, otherwise it suffices to work on each connected component.  All our algorithms work with $U$, while the original graph $G$ is not required.

For a (possibly not simple) cycle $C$, we define the \emph{region bounded by $C$} the maximal subgraph of $U$ whose external face has $C$ as boundary.

W.l.o.g., we assume that the terminal pairs are distinct, i.e., there is no pair $i,j\in[k]$ such that $\{s_i,t_i\}=\{s_j,t_j\}$. Let $\gamma_i$ be the path in $f^\infty$ that goes clockwise from $s_i$ to $t_i$, for $i\in[k]$. By hypothesis, we assume that pairs $\{(s_i,t_i)\}_{i\in[k]}$ are \emph{well-formed}, i.e., for all $j,\ell\in[k]$ either ${\gamma_j}\subset{\gamma_\ell}$ or ${\gamma_j}\supset{\gamma_\ell}$ or ${\gamma_j}$ and ${\gamma_\ell}$ share no edges. We note that there exists a set of pairwise non-crossing $s_it_i$ paths if and only if terminal pairs are well-formed.

We suppose that terminal vertices are pendant in $f^\infty$. More clearly, for every $x\in\{s_i,t_i\}_{i\in[k]}$, we assume that there exists an edge $xx'$ for a vertex $x'\in f^\infty$. Despite this, for drawing simplicity, we draw $f^\infty$ as a simple cycle in all figures. It is immediate to see that any path from $x$ must contain edge $xx'$, hence the relation among lengths of paths is the same as if terminal vertices were contained in the external face (contracting edge $xx'$).

Given $i\in[k]$, we denote by \emph{$i$-path} an $s_it_i$ path. It is always useful to see each $i$-path as oriented from $s_i$ to $t_i$, for $i\in[k]$, even if the path is undirected. For an $i$-path $p$, we define $\Left_p$ as the left portion of $U$ with respect to $p$, i.e., the finite region bounded by the cycle formed by $\uu p$ and $\gamma_i$; similarly, we define $\Right_p$ as the right portion of $U$ with respect to $p$, i.e., the finite region bounded by the cycle formed by $\uu p$ and $f^\infty \setminus \gamma_i$.

For an $i$-path $p$ and a $j$-path $q$, we say that $q$ \emph{is to the right of} $p$ if $q\subseteq\Right_p$, similarly, we say that $q$ \emph{is to the left of} $p$ if $q\subseteq\Left_p$.

Given $R\subseteq U$ and an $i$-path $p\subseteq R$, for some $i\in[k]$, we say that $p$ is the \emph{leftmost $i$-path in $R$} if $p$ is to the left left of $q$ for each $i$-path $q\subseteq R$. Similarly, we say that $p$ is the \emph{rightmost $i$-path in $R$} if $p$ is to the right of $q$ for each $i$-path $q\subseteq R$.

If $R$ is a subgraph of $U$, then we denote by $\partial R$ the external face of $R$. Moreover, we define $\mathring{R}=R\setminus\partial R$. %Note that, given an $i$-path $p$, for some $i\in[k]$,  $\uu{p} = {\partial\Left_p}\cap {\partial\Right_p}$.

\subsection{Paths and non-crossing paths}
\label{section:paths_and_non-crossing_paths}

Given an $ab$ path $p$ and a $bc$ path $q$, we define $p\circ q$ as the (possibly not simple) path obtained by the  union of $p$ and $q$. %We stress that $\circ$ is associative.

Given a simple path $p$ and two vertices $x,y$ of $p$, we denote by $p[x,y]$ the subpath of $p$ with extremal vertices $x$ and $y$.

Now we introduce an operator that allows us to 
replace a subpath in a path.

\begin{definition}
Let $p$ be a simple $ab$ path, let $u,v\in V(p)$ such that $a,u,v,b$ appear in this order in $p$ and let $q$ be a $uv$ path. We denote by $p\rtimes q$ the (possibly not simple) path $p[a,u]\circ q\circ p[v,b]$.
\end{definition}

%\noindent
Figure~\ref{fig:rtimes} shows two examples of operator $\rtimes$.

\begin{figure}[h]
\captionsetup[subfigure]{justification=centering}
\centering
	\begin{subfigure}{4cm}
\begin{overpic}[scale=1.1,percent]{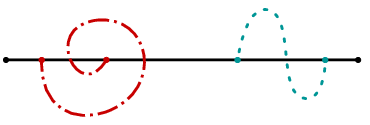}%\label{fig:rtimes_1}
\put(0,23.5){$a$}
\put(96,23.5){$b$}
\put(9,23.5){$u$}
\put(27.5,23.5){$v$}
\put(62,15){$y$}
\put(87,23.5){$x$}
\put(50,24){$p$}
\put(21,1){$q$}
\put(77.5,32){$r$}
\end{overpic}
%\caption{}\label{fig:rtimes_1}
\end{subfigure}
\qquad
	\begin{subfigure}{4cm}
\begin{overpic}[scale=1.1,percent]{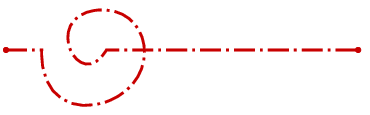}%\label{fig:rtimes_2}
\put(55,24){$p\rtimes q$}
\end{overpic}
%\caption{}\label{fig:rtimes_2}
\end{subfigure}
\qquad
	\begin{subfigure}{4cm}
\begin{overpic}[scale=1.1,percent]{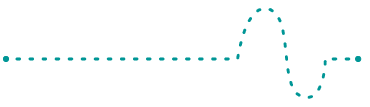}
\put(35,24){$p\rtimes r$}
\end{overpic}
%\caption{}\label{fig:rtimes_3}
\end{subfigure}	
  \caption{illustrating operator $\rtimes$.}
\label{fig:rtimes}
\end{figure}

%\noindent
The following definition, explained in Figure~\ref{fig:envelopes}, introduces right envelopes and left envelopes, that are important for transforming paths.

\begin{definition}
Let $p$ be an $i$-path and let $q$ be a $j$-path, for some $i,j\in[k]$. We define $p RE q$ (resp., $p LE q$) as the rightmost (resp., leftmost) $i$-path in $p\cup q$.
\end{definition}

\begin{figure}[h]
\captionsetup[subfigure]{justification=centering}
\centering
	\begin{subfigure}{3.5cm}
\begin{overpic}[width=3.5cm,percent]{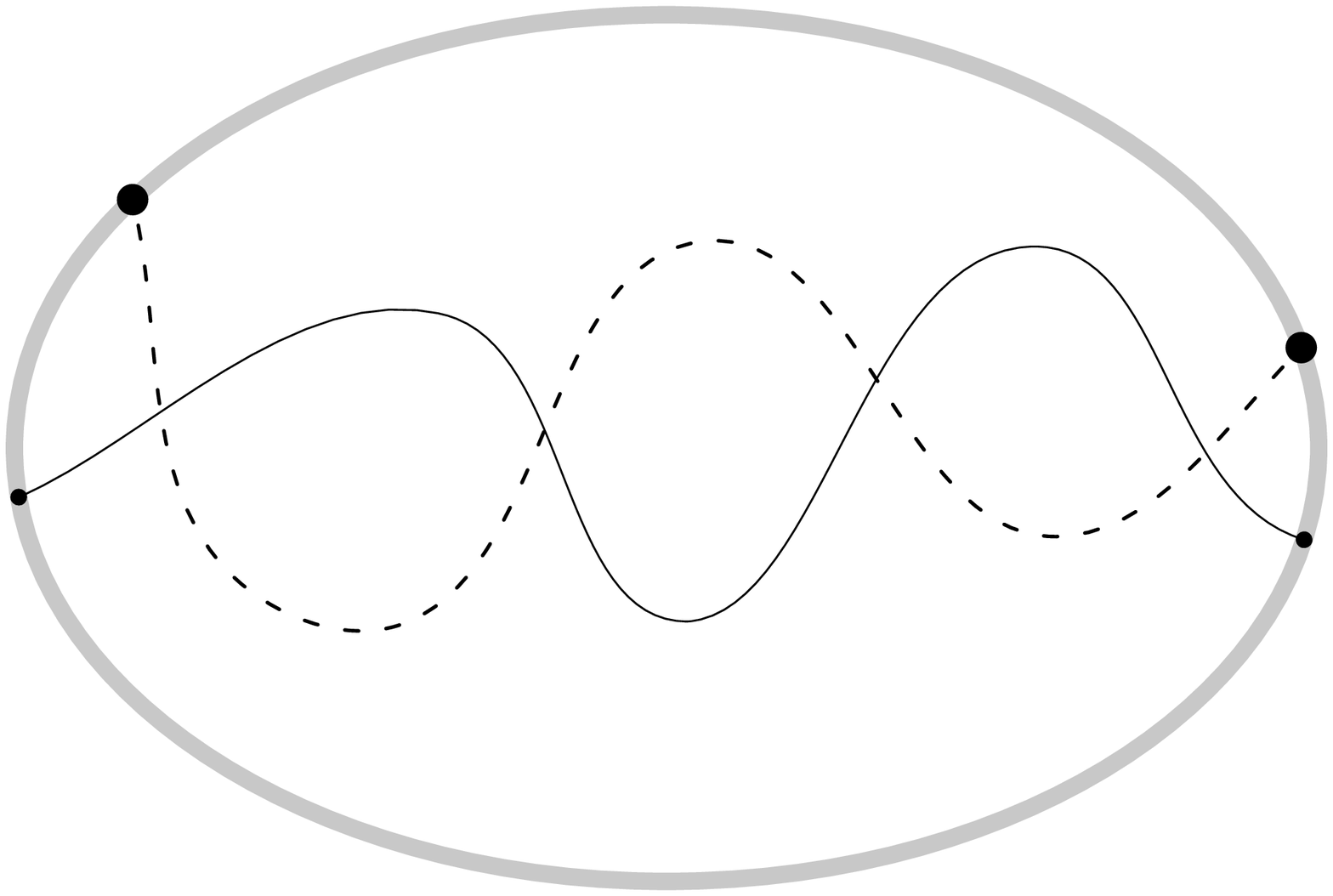}
\put(100,25){$s_i$}
\put(-7.5,30){$t_i$}
\put(0,53.7){$s_j$}
\put(100,40.5){$t_j$}
\put(69,50){$p$}
\put(25,15){$q$}
\end{overpic}
%\caption{}\label{???}
\end{subfigure}
\qquad\quad
	\begin{subfigure}{3.5cm}
\begin{overpic}[width=3.5cm,percent]{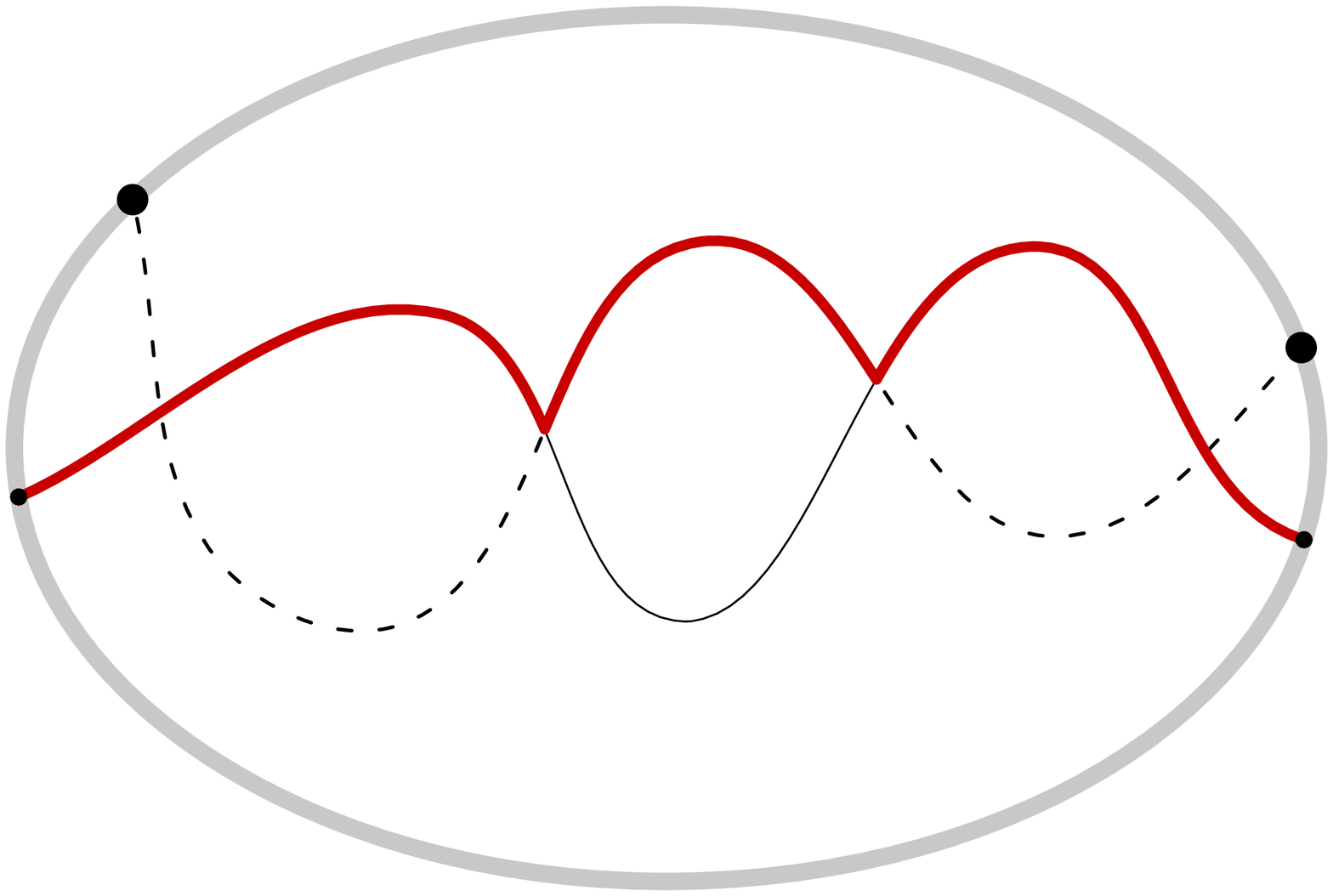}
\put(100,25){$s_i$}
\put(-7.5,30){$t_i$}
\put(0,53.7){$s_j$}
\put(100,40.5){$t_j$}
\put(40,51){$p RE q$}
\end{overpic}
%\caption{}\label{???}
\end{subfigure}
\qquad\quad
	\begin{subfigure}{3.5cm}
\begin{overpic}[width=3.5cm,percent]{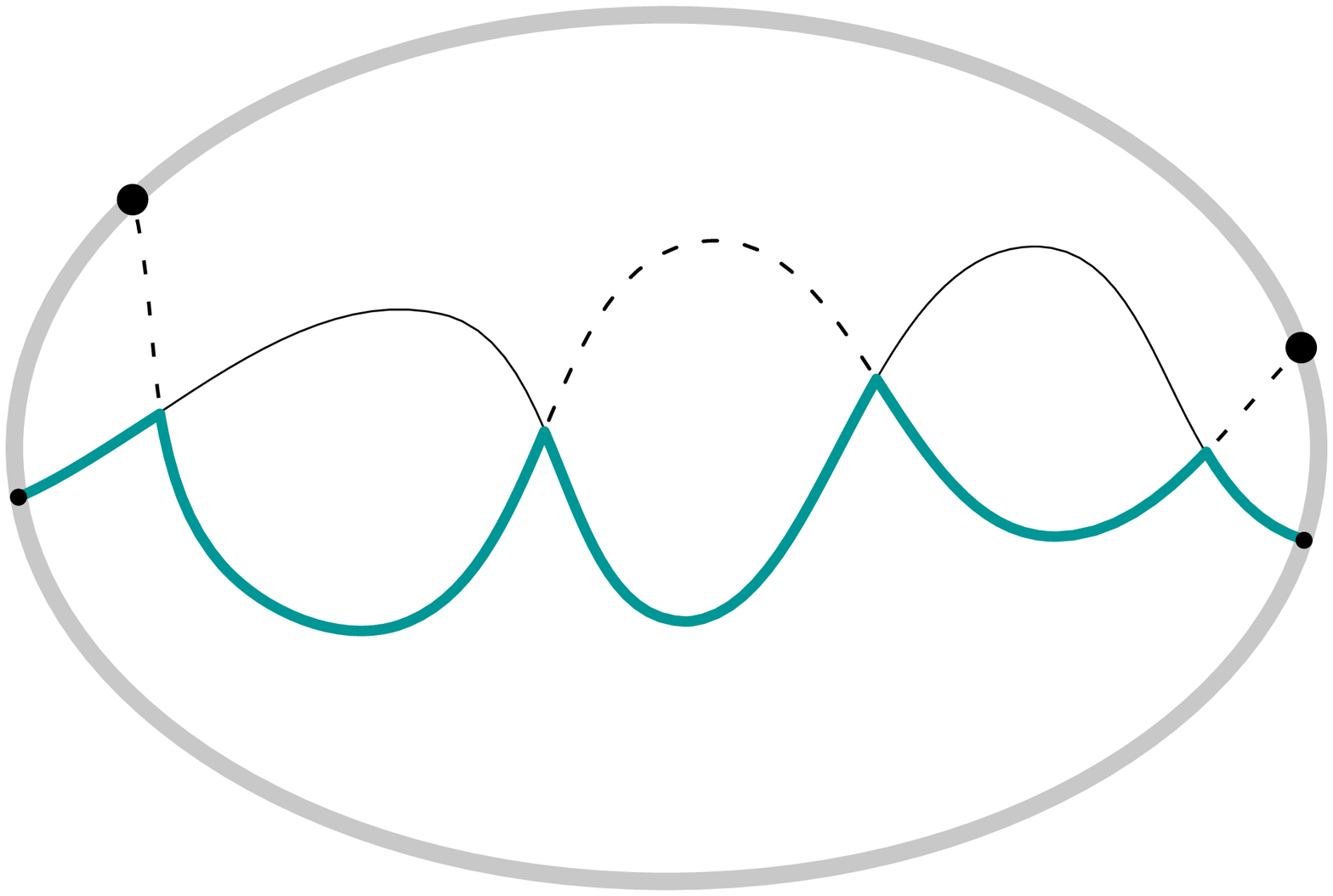}
\put(100,25){$s_i$}
\put(-7.5,30){$t_i$}
\put(0,53.7){$s_j$}
\put(100,40.5){$t_j$}
\put(45,10){$p LE q$}
\end{overpic}
%\caption{}\label{???}
\end{subfigure}	
  \caption{an example of right envelope and left envelope.}
\label{fig:envelopes}
\end{figure}

%\subsubsection{Non-crossing paths and single-touch paths}
%\label{section:crossing_and_single_touch_paths}
We say that two paths in a plane graph $G$ are \emph{non-crossing} if the curves they describe in the graph embedding do not cross each other; a combinatorial definition of
non-crossing paths can be based on the \emph{Heffter-Edmonds-Ringel rotation principle}~\cite{pisanski-potocnik}.  We stress that this property depends on the embedding of the graph. Non-crossing paths may share vertices and/or edges. We also define a class of paths that will be used later. 

\begin{definition}
Two paths $p$ and $q$ are \emph{single-touch} if $\uuu p\cap \uuu q$ is a (possibly empty) path.
\end{definition}

Examples of non-crossing paths and single-touch paths are given in Figure~\ref{fig:non_crossing_and_single-touch}.

\begin{figure}[h]
\captionsetup[subfigure]{justification=centering}
\centering
%FIGURA 1
	\begin{subfigure}{2.6cm}
\begin{overpic}[width=2.6cm,percent]{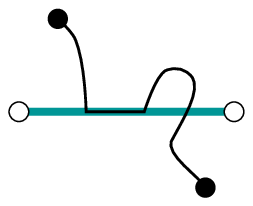}
\end{overpic}
\caption{}\label{fig:non-crossing_a}
\end{subfigure}
%\quad
%FIGURA 2
	\begin{subfigure}{2.6cm}
\begin{overpic}[width=2.6cm,percent]{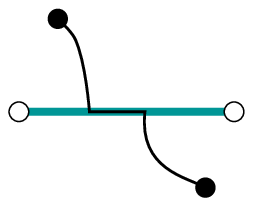}
\end{overpic}
\caption{}\label{fig:non-crossing_e}
\end{subfigure}
%\quad
%FIGURA 3
	\begin{subfigure}{2.6cm}
\begin{overpic}[width=2.6cm,percent]{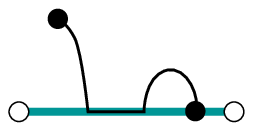}
\end{overpic}
\caption{}\label{fig:non-crossing_b}
\end{subfigure}	
%\quad
%FIGURA 4
	\begin{subfigure}{2.6cm}
\begin{overpic}[width=2.6cm,percent]{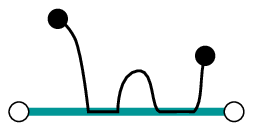}
\end{overpic}
\caption{}\label{fig:non-crossing_b_bis}
\end{subfigure}	
%\quad
%FIGURA 5
	\begin{subfigure}{2.6cm}
\begin{overpic}[width=2.6cm,percent]{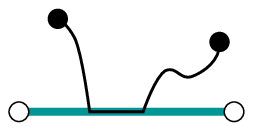}
\end{overpic}
\caption{}\label{fig:non-crossing_c}
\end{subfigure}	
  \caption{paths in (\subref{fig:non-crossing_a}) and (\subref{fig:non-crossing_e}) are crossing, while paths in (\subref{fig:non-crossing_b}), (\subref{fig:non-crossing_b_bis}), (\subref{fig:non-crossing_c}) are non-crossing. Moreover, paths in (\subref{fig:non-crossing_a}), (\subref{fig:non-crossing_b}) and (\subref{fig:non-crossing_b_bis}) are not single-touch, while paths in (\subref{fig:non-crossing_e}) and (\subref{fig:non-crossing_c}) are single-touch.}
\label{fig:non_crossing_and_single-touch}
\end{figure}

We note that the single-touch property does not depend on the embedding, and if the terminal-pair are well-formed, then it implies the non-crossing property. This is explained in the following trivial remark.

\begin{remark}
\label{remark:noncrossing_for_all_embeddings}
If $\{\pi_i\}_{i\in[k]}$ is a set of simple single-touch undirected paths, where $\pi_i$ is an $i$-path, for $i\in[k]$, then $\{\pi_i\}_{i\in[k]}$ is a set of pairwise non-crossing paths for all the embeddings of $U$ such that the terminal pairs $\{(s_i,t_i)\}_{i\in[k]}$ are well-formed.
\end{remark}

In the general case, $U$ might be the union of a set of shortest paths containing crossing pairs of paths and/or not single-touch pairs of paths. This may happen if there are  more shortest paths in the graph joining the same pair of vertices.  Uniqueness of shortest paths can be easily ensured by introducing small perturbations in the weight function. We wish to point out that the technique we describe in this paper does not rely on perturbation, but we break ties by choosing rightmost or leftmost paths. This implies that our results can be used also in the unweighted case, as already done by Balzotti and Franciosa in~\cite{balzotti-franciosa_2}.

\subsection{Genealogy tree}
\label{section:genealogy_tree}

Given a well-formed set of pairs $\{(s_i,t_i)\}_{i\in[k]}$, we define here a partial ordering as in~\cite{giappo2} that represents the inclusion relation between $\gamma_i$'s. This relation intuitively corresponds to an \emph{adjacency} relation between non-crossing shortest  paths joining each pair.

Choose an arbitrary $i^*$ such that there are neither $s_{j}$ nor $t_{j}$, with $j\neq i^*$, walking on $\esterna$ from $s_{i^*}$ to $t_{i^*}$ (either clockwise or counterclockwise), and let $e^*$ be an arbitrary edge on that walk. For each $j \in [k]$, we can assume that $e^*\not\in\gamma_j$, indeed if it is not true, then it suffices to switch $s_j$ with $t_j$. We say that $i \prec j$ if $\gamma_i\subset\gamma_j$. We define the \emph{genealogy tree}  $T_g$ of a set of well formed terminal pairs as the transitive reduction of poset $([k],\prec)$.

If $i\prec j$, then we say that $i$ is a \emph{descendant} of $j$ and $j$ is an \emph{ancestor} of $i$. Given $i,j\in[k]$, we say that $i$ and $j$ are \emph{uncomparable} if $i\not\prec j$ and $j\not\prec i$. Finally, we denote by $p(j)$ the \emph{parent of $j$}, i.e.,  the lowest ancestor of $j$ with respect to $\prec$.

Figure~\ref{fig:gt} shows an example of well-formed terminal pairs, and the corresponding genealogy tree for $i^*=1$.  From now on, in all figures we draw $f^\infty_G$ by a solid light grey line.

\begin{figure}[h]
\captionsetup[subfigure]{justification=centering}
\centering
%FIGURA 1
	\begin{subfigure}{4cm}
\begin{overpic}[width=6cm,percent]{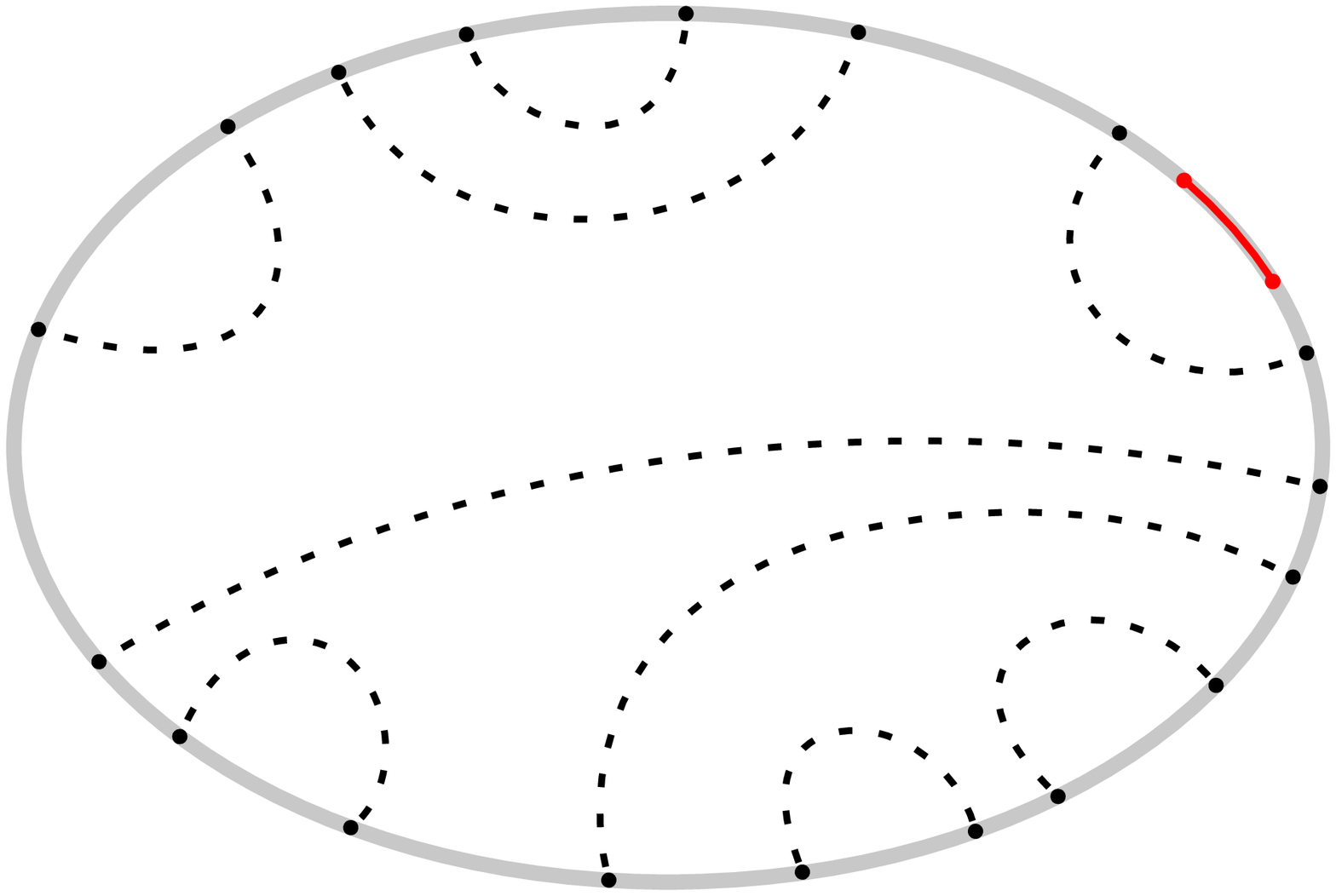}
\put(91,51){\color{red} $e^*$}
\put(-4.5,32){$f^\infty_G$}

\put(82,58){$t_1$}
\put(96,42){$s_1$}
\put(97.5,30.5){$s_2$}
\put(7,16.5){$t_2$}
\put(95.4,24){$s_3$}
\put(44.5,0){$t_3$}
%\put(93.5,20.5){$s_3$}
%\put(52.5,0.5){$t_3$}
\put(89.9,16.5){$s_4$}
\put(79,7){$t_4$}
\put(70.2,4.2){$s_5$}
\put(60,0.5){$t_5$}
\put(25,5){$s_6$}
\put(15,10){$t_6$}
\put(0,43.5){$s_7$}
\put(15.5,59){$t_7$}
\put(24,63.5){$s_8$}
\put(64,65.6){$t_8$}
\put(34,66.5){$s_9$}
\put(50,67.5){$t_9$}

\end{overpic}
%\caption{}\label{LABEL1}
\end{subfigure}
\qquad\qquad\qquad\qquad\quad
%FIGURA 2
	\begin{subfigure}{3.6cm}
\begin{overpic}[width=3.6cm,percent]{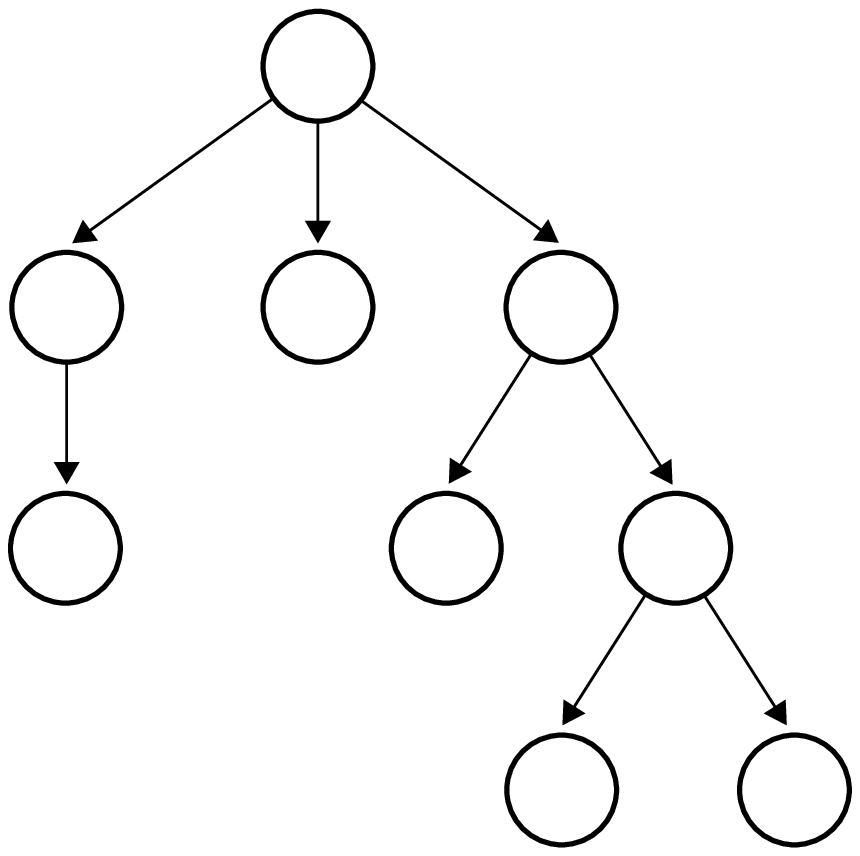}
\put(34.6,90){$1$}
\put(5,61){$8$}
\put(34.6,61){$7$}
\put(63.6,61){$2$}
\put(5,32.9){$9$}
\put(49.9,32.9){$6$}
%\put(60,48.3){$3$}
\put(77.1,32.9){$3$}
\put(63.1,4.2){$5$}
\put(90.4,4.2){$4$}
\end{overpic}
%\caption{}\label{LABEL2}
\end{subfigure}
\caption{on the left an example of well-formed terminal pairs. If we choose $i^*=1$, then we obtain the genealogy tree on the right.}
\label{fig:gt}
\end{figure}

\subsection{Path-sets}
\label{section:path-sets}

We remind that $U$ is the union of shortest $i$-paths, for $i\in[k]$. That is, there exist $p_1,\ldots,p_k$ such that $p_i$ is a shortest $i$-path, for $i\in[k]$, and $U=\bigcup_{i\in[k]}\uu{p_i}$. Despite this, it is possible that a set of shortest paths $p'_1,\ldots,p'_k$ exists such that $p_i'$ is a shortest $i$-path, for $i\in[k]$, $U=\bigcup_{i\in[k]}\uu{p_i'}$  but  $p_j\neq p'_j$ for at least one $j\in[k]$.
In other words, the union of shortest paths does not uniquely correspond to a set of shortest paths. To overcome this ambiguity, we introduce the definition of path-set.

We say that a set of paths $P=\{p_1,\ldots,p_k\}$ is a \emph{path-set} if for all $i\in[k]$, $p_i$ is a shortest $i$-path and, for all $j,\ell\in[k]$, $p_j$ and $p_\ell$ are non-crossing. We denote the set of all possible path-sets by $\mathcal{P}$. If a path-set $P\in\mathcal{P}$ satisfies $\bigcup_{p\in P} p=U$, then we call $P$ a \emph{complete path-set}. We underline that $U$ might be obtained by the union of crossing paths. Despite this, by using right and left envelopes, it is easy to see that a complete path-set always exists.

%We stress that, being the terminal pairs well-formed, we can always require non-crossing property. Indeed, let $p$ be an $i$-path and let $q$ be a $j$-path, for some $i,j\in[k]$, such that $p$ and $q$ are crossing. Then there exists two paths $p'=p LE q$ and $q'=q RE p$ (or the vice-versa: $p'=p RE q$ and $q'=q LE p$)  such that $p'$ and $q'$ are non-crossing and $p'\cup q'=p\cup q$.  The same transformation can be applied until no crossing paths exist, thus proving that there exists $P\in\mathcal{P}$ such that $\bigcup_{p\in P}p=U$. 
%\todob{dire che esiste sempre un completo non-crossing (casomai spostare l'oss di prima qui) che pero' puo' non essere single-touch. Volendo non si dice nulla o dire che un complete esiste sempre}

\section{Shortcuts}
\label{section:shortcuts}
Now we have all the necessary machinery to introduce \emph{shortcuts}, that are the main tool of \PIPPO introduced in Section~\ref{section:PIPPO_and_correctness}, and the most important theoretical novelty of this paper.

Roughly speaking, a shortcut appears if there exists a face $f$ adjacent to a path $\pi$ so that we can modify $\pi$ going around $f$ without increasing its length. We show that we can decide whether a path is a shortest path  by looking at the existence of shortcuts: in this way, we can check a global property of a path $\pi$---i.e., being a shortest path---by checking a local property---i.e., the presence of shortcuts in faces adjacent to $\pi$.
This result is not true for general plane graphs, but it only holds when the input graph is the union of shortest paths joining well-formed terminal pairs.

\begin{definition}
\label{def:shortcut}
Given a path $\lambda$ and a face $f$ containing two vertices $u,v\in\lambda$, we say that a $uv$ subpath $q$ of $f$ not contained in $\lambda$ is a \emph{shortcut of} $\lambda$ if $w(\lambda\rtimes q)\leq w(\lambda)$.
\end{definition}

Figure~\ref{fig:shortcut} explains Definition~\ref{def:shortcut}.
\begin{figure}[h]
\captionsetup[subfigure]{justification=centering}
\centering
	\begin{subfigure}{3.5cm}
\begin{overpic}[width=3.5cm,percent]{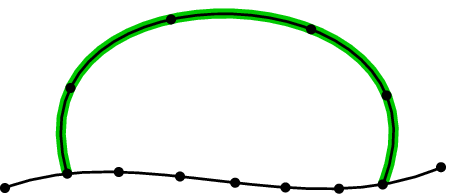}
\put(54,-6.5){$\lambda$}
%\put(83,-4){$u$}
%\put(13,-1.5){$v$}
\put(50,20){$f$}
\end{overpic}
%\caption{}\label{???}
\end{subfigure}
\qquad
	\begin{subfigure}{3.5cm}
\begin{overpic}[width=3.5cm,percent]{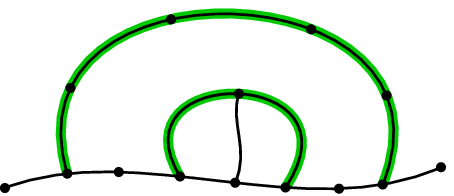}
\put(54,-6.5){$\lambda$}
\put(50,30){$f$}
\put(44,10){$g$}
\put(57,10){$h$}

\end{overpic}
%\caption{}\label{???}
\end{subfigure}
\qquad
	\begin{subfigure}{3.5cm}
\begin{overpic}[width=3.5cm,percent]{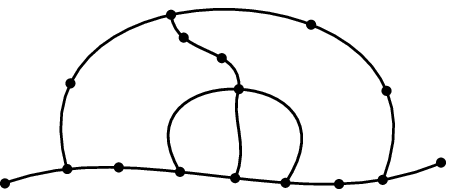}
\put(54,-6.5){$\lambda$}
\put(25,20){$f$}
\put(44,10){$g$}
\put(57,10){$h$}
\put(70,23){$k$}
\end{overpic}
%\caption{}\label{???}
\end{subfigure}	
  \caption{all edges have unit weight. On the left, highlighted in green, there is a shortcut  for $\lambda$ contained in $\partial f$. In the middle there are two shortcuts for $\lambda$ both contained in $\partial f$. On the right there are no shortcut for $\lambda$.}
\label{fig:shortcut}
\end{figure}

%To compute lengths of shortest paths we follow the same procedure in~\cite{mitchell}, that solves the problem of finding $k$ shortest (with respect to a geometric distance) non-crossing paths in a polygon with $n$ vertices. Anyway, we deal with a very different problem: first, Polischuck and Mitchell~\cite{mitchell} find shortest paths, while we have their union as input; second, they work in a geometric setting with Euclidean distances, while distances in a plane weighted graph have a more general structure.

%Another small difference with respect to~\cite{mitchell} is that we find the shortest $s_it_i$ paths according to a postorder visit of $T_g$, while in~\cite{mitchell} the genealogy tree is visited \emph{by levels}: all the leaves are visited first, then the parents of the leaves, and so on. We stress that this change has no effect on proofs or on computational complexity, but our choice makes the implementation easier.

\begin{theorem}\label{th_shortcut}
Let $\lambda$ be an $i$-path, for some $i\in[k]$. If there are no shortcut of $\lambda$, then $\lambda$ is a shortest $i$-path.
\end{theorem}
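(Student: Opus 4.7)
The plan is to proceed by contradiction: suppose $\lambda$ admits no shortcut yet fails to be a shortest $i$-path. Because $U$ is the union of shortest paths and therefore contains at least one shortest $i$-path, I may choose a shortest $i$-path $p\subseteq U$ with $w(p)<w(\lambda)$. Both $\lambda$ and $p$ travel from $s_i$ to $t_i$, so they meet at these endpoints and possibly at other vertices, and their symmetric difference decomposes into cycles, each bounding a ``bigon'' of the form $\lambda[u_j,v_j]\cup p[u_j,v_j]$. Since $w(\lambda)-w(p)>0$ is the sum of the per-bigon weight deficits, at least one bigon satisfies $w(p[u,v])<w(\lambda[u,v])$; I fix such a bigon and denote by $R$ the region it bounds.

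The core of the proof is to exhibit a face of $U$ contained in $R$ which provides a shortcut of $\lambda$, contradicting the no-shortcut hypothesis. I would argue by induction on the number of faces of $U$ inside $R$. The base case is immediate: if $R$ is a single face $f$, then $\partial f=\lambda[u,v]\cup p[u,v]$, so $p[u,v]$ is itself a $uv$-subpath of $\partial f$ not contained in $\lambda$ with $w(p[u,v])<w(\lambda[u,v])$, i.e., a shortcut.

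For the inductive step, I would select a face $f$ of $R$ adjacent to $\lambda[u,v]$, write $\partial f=\lambda[u',v']\cup q$ with $q$ the portion of $\partial f$ disjoint from $\lambda$, and check whether $w(q)\leq w(\lambda[u',v'])$. If so, $q$ is already a shortcut. Otherwise, the modified $i$-path $\lambda':=\lambda\rtimes q$ satisfies $w(p[u,v])<w(\lambda'[u,v])$ and the region $R'$ bounded by $\lambda'[u,v]\cup p[u,v]$ has strictly fewer faces than $R$, so the inductive hypothesis delivers a shortcut of $\lambda'$ inside $R'$.

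The main obstacle is to transfer the shortcut of $\lambda'$ back into a shortcut of $\lambda$: the transfer is immediate when the shortcut's endpoints lie on $\lambda\cap\lambda'$, but becomes delicate when an endpoint sits on the inserted piece $q$. I would address this either by strengthening the inductive statement (asserting a shortcut whose endpoints are constrained to the unmodified portion of the path), or by exploiting the union-of-shortest-paths structure of $U$: every internal edge of $R$ lies on some shortest $j$-path $p_j$, which must enter and exit $R$ through $\partial R$, and the shortestness of $p_j$ forces a weight constraint on every face $f\subseteq R$ that guarantees the existence of a face adjacent to $\lambda[u,v]$ whose complementary side is no heavier than its $\lambda$-portion. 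This structural argument, together with the fact that the analogous statement fails for arbitrary plane graphs and is rescued only by the union-of-shortest-paths hypothesis on $U$, is the subtle point on which the full proof hinges.
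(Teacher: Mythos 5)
There is a genuine gap, and it sits exactly where you flag it. Your inductive step does not go through as described: when $w(q)>w(\lambda[u',v'])$ you replace $\lambda$ by the strictly heavier path $\lambda'=\lambda\rtimes q$, and a shortcut $\sigma$ of $\lambda'$ only gives $w(\sigma)\leq w(\lambda'[x,y])$; if $\lambda'[x,y]$ contains the inserted piece $q$ (or if an endpoint of $\sigma$ lies on $q$), this says nothing about $w(\lambda[x,y])$, so nothing transfers back to $\lambda$. The fallback you sketch --- ``the shortestness of $p_j$ forces a weight constraint on every face $f\subseteq R$'' --- is not true as stated and is precisely the content of the theorem: a face interior to $R$ can perfectly well have its non-$\lambda$ side heavier than its $\lambda$-side (otherwise the statement would hold for arbitrary plane graphs, which the paper explicitly notes it does not). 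What must be shown is the existence of \emph{one} face $f\subseteq R$ adjacent to $\lambda$ whose complementary boundary $\partial f\setminus\lambda$ lies entirely on a \emph{single} shortest path, since only then does shortestness of subpaths give $w(\partial f\setminus\lambda)\leq w(\partial f\cap\lambda)$. You assert such a face exists but give no argument, and ``every internal edge of $R$ lies on some shortest $j$-path'' alone does not yield it: the paths covering $\mathring{R}$ could a priori weave so that no face adjacent to $\lambda$ is bounded on its far side by one path.

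The paper closes exactly this gap with a structural argument you do not supply: it fixes a \emph{complete path-set} $P$ (a non-crossing family of shortest paths whose union is $U$), so every edge of $\mathring{R}$ lies on some $q\in P$ whose endpoints lie on $\gamma_i$, and then runs a monotone peeling argument. Defining $F_q=\{f\in\mathcal{F}_U\mid \partial f\subseteq R\cap\Left_q\}$, it picks $e_1\in\mathring{R}$ on some $q_1\in P$, notes $0<|F_{q_1}|<|F_\tau|$, then picks an edge strictly inside $\Left_{q_1}\cap R$ on some $q_2$, and so on; the cardinalities strictly decrease, so some $p\in P$ has $|F_p|=1$, and the unique face $f\in F_p$ satisfies $\partial f\setminus\lambda\subseteq p$, giving the desired shortcut. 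Note that non-crossingness of $P$ is what makes the $\Left_q$ regions nest and the count decrease --- your proposal never invokes it. Note also that the paper needs no ``deficit bigon'': since shortcuts are defined with a non-strict inequality, any region $R$ bounded by edge-disjoint pieces of $\lambda$ and a shorter shortest path works. So your setup (contradiction, bigon region $R$, base case when $R$ is a single face) matches the paper, but the heart of the proof --- producing the face whose far side is covered by one shortest path --- is missing.
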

\begin{proof}
Let $P\in\mathcal{P}$ be a complete path-set and let $\tau$ be the unique shortest $i$-path in $P$. If $\tau=\lambda$ then the thesis holds. Thus let us assume by contradiction that $w(\tau)<w(\lambda)$ and $\lambda$ has no shortcuts. 

Let $a,b\in V(\lambda)\cap V(\tau)$ be two vertices such that $\tau[a,b]$ and $\lambda[b,a]$ share no edges (such $a$ and $b$ exist because $\tau\neq\lambda$ and they are both $i$-path). Let $C$ be the simple cycle $\tau[a,b]\circ \lambda[b,a]$, and let $R$ be the region bounded by $C$. If $R$ is a face of $U$, then $\tau[a,b]$ is a shortcut for $\lambda$, absurdum. Hence we assume that there exist edges in $\mathring{R}$, see Figure~\ref{fig:tau} on the left.

Either $R\subseteq\Left_\tau$ or $R\subseteq\Right_\tau$. W.l.o.g., we assume that $R\subseteq\Left_\tau$. Being $P$ a complete path-set, for every edge $e\in\mathring{R}$ there exists at least one path $q\in P$ such that $e\in q$. Moreover, the extremal vertices of $q$ are in $\gamma_i$ because paths in $P$ are non-crossing and $R\subseteq\Left_\tau$. Indeed, we recall that the extremal vertices are pending vertices (even if in the figures, for convenience, we draw the infinite face of $G$ as a simple cycle).

Now we show by construction that there exist a path $p \in P$ and a face $f$ such that $f\subseteq R$, $\partial f$ intersects $\lambda$ on vertices and $\partial f\setminus\lambda\subseteq p$; thus $\partial f\setminus\lambda$ is a shortcut of $\lambda$ because $p$ is a shortest path.

For all $q\in P$ such that $q\subseteq\Left_\tau$ we assume that $\Left_q\subseteq\Left_\tau$ (if it is not true, then it suffices to switch the extremal vertices of $q$).

For every $q\subseteq\Left_\tau$, let $F_q=\{f\in\mathcal{F}_U \,|\, \partial f\subseteq R\cap\Left_q\}$. To complete the proof, we have to find a path $p$ such that $|F_p|=1$, indeed the unique face $f$ in $F_p$ satisfies $\partial f\setminus\lambda\subseteq p$, and thus $\partial f\setminus\lambda$ is a shortcut for $\lambda$.

Now, let $e_1\in\mathring R$ and let $q_1\in P$ be such that $e_1\in q_1$. Being $e_1\in\mathring R$, then $|F_{q_1}|<|F_\tau|$ and $|F_{q_1}|>0$ because $e_1\in q_1$, see Figure~\ref{fig:tau} on the right. If $|F_{q_1}|=1$, then we have finished, otherwise we choose $e_2\in \mathring{R}\cap\mathring{\Left}_{q_1}$ and $q_2\in P$ such that $e_2\in q_2$. It holds that $|F_{q_2}|<|F_{q_1}|$ and $|F_{q_2}|>0$ because $e_2\in q_2$. By repeating this reasoning, and being $P$ a complete path-set, we find a path $p$ such that $|F_p|=1$.
\end{proof}

\begin{figure}[h]
\captionsetup[subfigure]{justification=centering}
\centering
	\begin{subfigure}{5cm}
\begin{overpic}[width=5cm,percent]{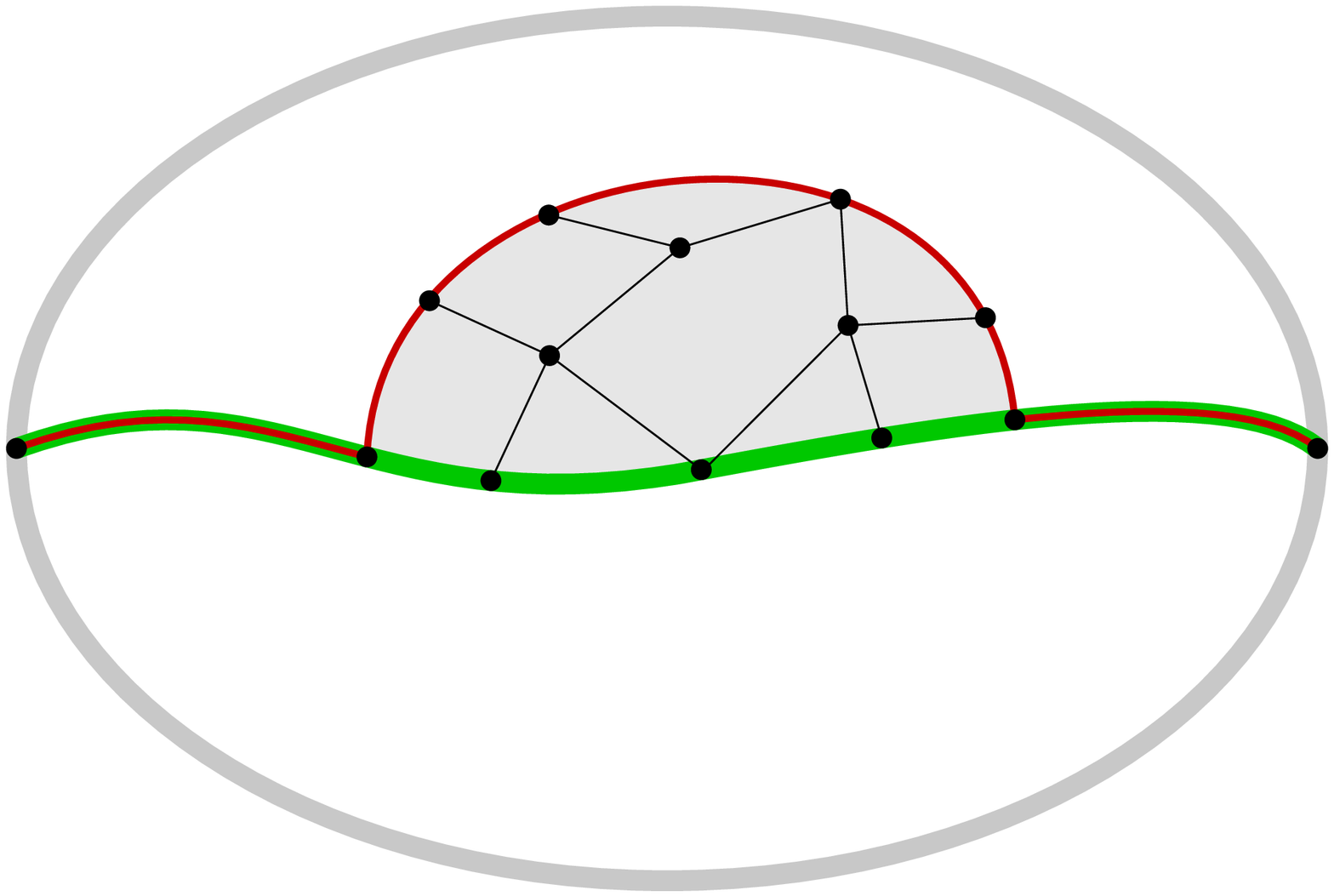}
\put(101,32){$s_i$}
\put(-6,32){$t_i$}
\put(75,30){$a$}
\put(26,26){$b$}

\put(55,55){$\tau$}
\put(57,26){$\lambda$}
\put(50,40){$R$}
\end{overpic}
%\caption{}\label{???}
\end{subfigure}
\qquad\qquad
	\begin{subfigure}{5cm}
\begin{overpic}[width=5cm,percent]{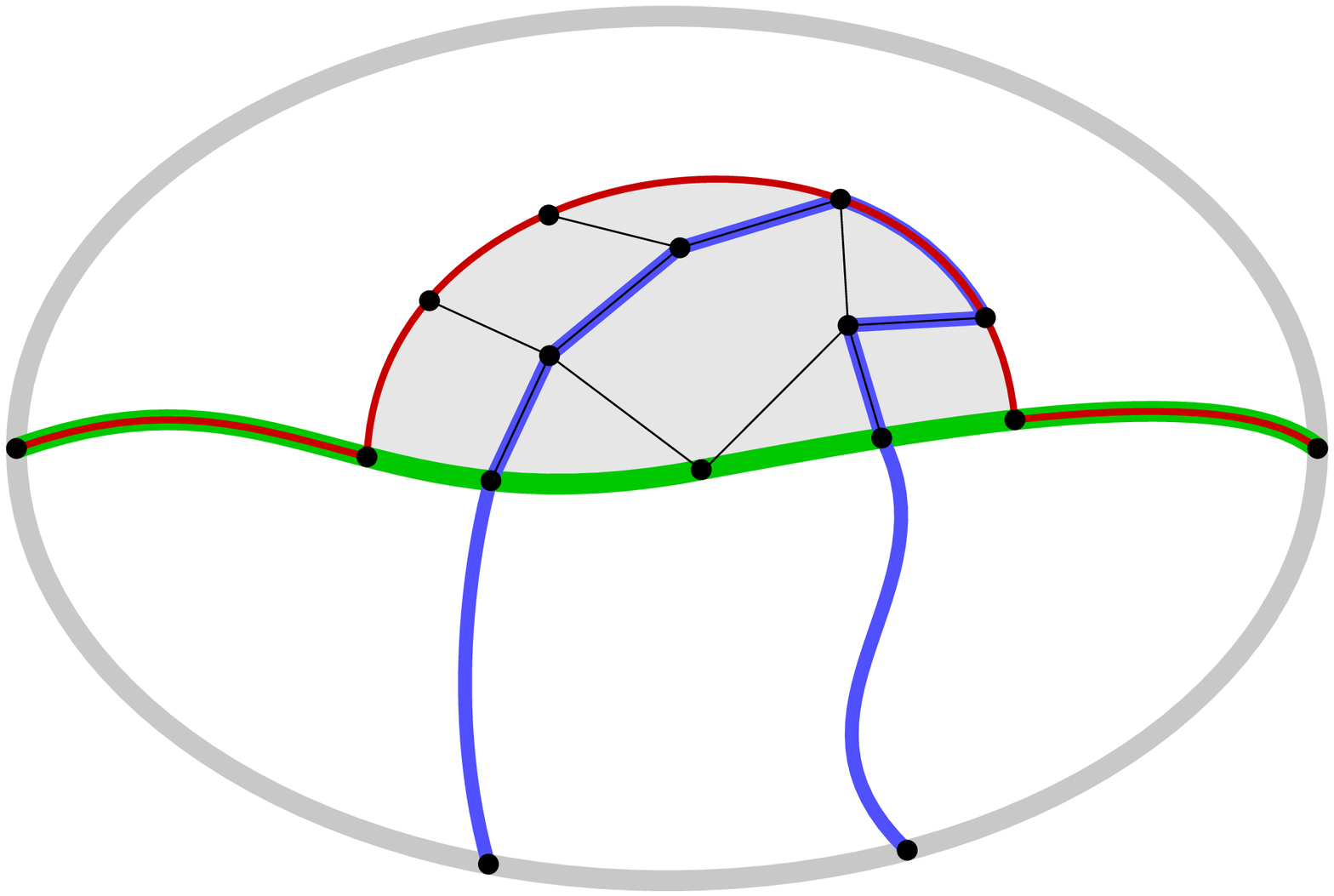}
\put(101,32){$s_i$}
\put(-6,32){$t_i$}
\put(55,55){$\tau$}
\put(57,26){$\lambda$}
\put(50,40){$R$}
\put(27.8,16){$q_1$}

\end{overpic}
%\caption{}\label{???}
\end{subfigure}
  \caption{paths and regions used in Theorem~\ref{th_shortcut}'s proof. Path $\lambda$ is in green, $\tau$ in red, $q_1$ in blue and region $R$ is highlighted in grey. It holds that $|F_{q_1}|=4$.}
\label{fig:tau}
\end{figure}

Given a path $p$, we say that a path $q$ is a \emph{right shortcut for $p$} if $q$ is a shortcut of $p$ and $q\subseteq\Right_p$. The following corollary can be proved by the same approach of Theorem~\ref{th_shortcut} and it is more useful for our purposes.

\begin{corollary}\label{cor:right_shortcut}
Let $\lambda$ be an $i$-path, for some $i\in[k]$. If there are no right shortcut of $\lambda$, then there exists no path $\lambda'\subseteq\Right_\lambda$ satisfying $w(\lambda')\leq w(\lambda)$.
\end{corollary}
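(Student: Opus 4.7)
The plan is to mimic the proof of Theorem~\ref{th_shortcut}, but to keep track of which side of $\lambda$ the alternative witness lies on, so that the shortcut we extract is automatically contained in $\Right_\lambda$. I would assume for contradiction that $\lambda$ admits no right shortcut and yet some $i$-path $\lambda' \subseteq \Right_\lambda$ with $\lambda' \neq \lambda$ satisfies $w(\lambda') \leq w(\lambda)$; WLOG $\lambda'$ is simple (otherwise shorten it). Fix a complete path-set $P \in \mathcal{P}$, whose existence is guaranteed at the end of Section~\ref{section:path-sets}.

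To produce a local witness, I would decompose the two $i$-paths along their shared vertices and write $w(\lambda) - w(\lambda')$ as a sum of differences $w(\lambda[a_j,b_j]) - w(\lambda'[a_j,b_j])$ over the maximal segments on which $\lambda$ and $\lambda'$ diverge. Since this sum is nonnegative, at least one segment satisfies $w(\lambda'[a,b]) \leq w(\lambda[a,b])$; I would shrink to an innermost such pair, so that $C := \lambda'[a,b]\circ\lambda[b,a]$ is a simple cycle bounding a region $R$. Because $\lambda' \subseteq \Right_\lambda$, we have $R \subseteq \Right_\lambda$ automatically, without needing the WLOG side choice that appears in Theorem~\ref{th_shortcut}.

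The rest of the argument splits into the same two cases as that theorem. If $R$ is a single face of $U$, then $\lambda'[a,b] \subseteq \partial R \setminus \lambda$ is already a right shortcut, since it lies in $\Right_\lambda$ and
$w(\lambda \rtimes \lambda'[a,b]) = w(\lambda) - w(\lambda[a,b]) + w(\lambda'[a,b]) \leq w(\lambda)$.
Otherwise $\mathring R$ contains edges, and I would reuse verbatim the descent of Theorem~\ref{th_shortcut}: iteratively pick an interior edge $e_j \in \mathring R$ together with a path $q_j \in P$ through it, oriented so that $\Left_{q_j} \subseteq R$; the cardinalities $|F_{q_j}|$ strictly decrease while remaining positive, so after finitely many steps we reach a path $p \in P$ with $|F_p| = 1$. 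The unique face $f \in F_p$ then satisfies $f \subseteq R \subseteq \Right_\lambda$ and $\partial f \setminus \lambda \subseteq p$, so $\partial f \setminus \lambda$ is a shortcut of $\lambda$ lying in $\Right_\lambda$, i.e.\ a right shortcut, giving the desired contradiction.

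I expect the only genuinely new point to be the weight-accounting step that extracts a divergent segment $[a,b]$ with $w(\lambda'[a,b]) \leq w(\lambda[a,b])$; the topological descent itself is unchanged, because it only relies on non-crossingness of the paths in $P$ together with the containment $R \subseteq \Right_\lambda$, and never on which side of $\lambda$ we are on.
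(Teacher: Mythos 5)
Your overall strategy is the one the paper intends (it states the corollary without proof, saying only that it follows by the same approach as Theorem~\ref{th_shortcut}), and your weight-accounting step is a correct and genuinely necessary addition: unlike $\tau$ in Theorem~\ref{th_shortcut}, the witness $\lambda'$ is not a shortest path, so one must first extract a divergent segment with $w(\lambda'[a,b])\leq w(\lambda[a,b])$ (and, as you note, $R\subseteq\Right_\lambda$ then comes for free, so the produced shortcut is automatically a right shortcut).

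The gap is in the claim that the descent can be reused \emph{verbatim} because it ``only relies on non-crossingness of the paths in $P$ together with the containment $R\subseteq\Right_\lambda$''. In Theorem~\ref{th_shortcut} the far wall of $R$ is $\tau$, a member of the complete path-set, and this is used twice: every $q\in P$ with an edge in $\mathring R$ cannot cross $\tau$, so $q\subseteq\Left_\tau$ and one may normalize $\Left_q\subseteq\Left_\tau$; and that normalization is precisely what forces, at termination, every edge of $\partial f$ lying on the far wall and inside $\Left_p$ to belong to $p$, so that $\partial f\setminus\lambda\subseteq p$ and the shortcut inequality follows from $p$ being a shortest path. In your setting the far wall is $\lambda'$, which is not in $P$, need not be a shortest path, and may be crossed by paths of $P$; hence the descent may terminate at a face $f$ adjacent to both $\lambda$ and $\lambda'$, whose non-$\lambda$ boundary mixes edges of $p$ with edges of $\lambda'$. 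For such a face neither ``$p$ is shortest'' nor your global inequality $w(\lambda'[a,b])\leq w(\lambda[a,b])$ gives the local bound $w(\sigma)\leq w(\lambda[u,v])$ needed for $\sigma=\partial f\setminus\lambda$ to be a shortcut, and the contradiction does not materialize. (Your proposed orientation ``$\Left_{q_j}\subseteq R$'' also cannot hold as stated: $\Left_{q_j}$ is bounded in part by a portion of the external face, while $R$ is an internal region.) To close the argument you need an additional idea for faces leaning on $\lambda'$ --- for instance covering that portion of $\partial f$ by paths of $P$ and using iterated right envelopes as in the proof of Lemma~\ref{lemma:shortcuts}, or choosing the witness $\lambda'$ and the divergent region $R$ minimal in a suitable sense so that this configuration cannot occur.
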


\section{Finding an implicit representation of shortest paths}
\label{section:PIPPO_and_correctness}

We present here \PIPPO, that is a preprocessing step before computing distances between terminal pairs and listing non-crossing shortest paths. In Theorem~\ref{th:algoritmo(rightmost+redundant)} we prove the correctness of \PIPPO and explain some properties of found paths.

In Section~\ref{section:compute_distances_and_listing_paths}  we show that, by using the output of \PIPPO, the distances between terminal pairs can be computed in linear time and that the sequence of edges in a shortest path $\pi$ can be listed in $O(\max\{\ell,\ell\log(\frac{k}{\ell})\})$, where $\ell$ is the number of edges in $\pi$.
 
%\PIPPOMaiuscolo is introduced for computing implicitly a set of shortest single-touch paths, that will be used to compute distances for each $s_i-t_i$ pair (see Theorem~\ref{th:distances_in_O(n)}) and to list paths efficiently (see Theorem~\ref{th:cost_find_pi_i_from_output}).

The main idea behind \PIPPO is the following. We build a set of shortest $i$-paths $\{\lambda_i\}_{i\in[k]}$, by finding $\lambda_i$ at iteration $i$, where the terminal pairs are numbered according to a postorder visit of $T_g$. In particular, at iteration $i$ we find the rightmost shortest $i$-path in $U_i=\bigcap_{j\in[i-1]} \Right_{\lambda_{j}}$, in the following way: first we set $\lambda_i$ as the leftmost $i$-path in $U_i$, then we update $\lambda_i$ by moving right through shortcuts (the order in which shortcuts are chosen is not relevant). When $\lambda_i$ has no more right shortcuts, then it is the rightmost shortest $i$-path in $U_i$ by Corollary~\ref{cor:right_shortcut}. 

Now we explain the implicit representation of path $\lambda_i$ found by \PIPPO, for an arbitrary $i\in[k]$. We define $C_i=\{j\in[k]\,|\,p(j)=i\}$ the set of children of $i$ in $T_g$. Let $X_i=\lambda_i\cap\bigcup_{j\in C_i}\lambda_j$, i.e., the graph formed by edges shared by $\lambda_i$ and its children. In Theorem~\ref{th:algoritmo(rightmost+redundant)}, we prove that $\{\lambda_i\}_{i\in[k]}$ is a set of single-touch paths and that $\lambda_i$ is a shortest $i$-path, for all $i\in[k]$. In our implicit representation, we give only edges of $\lambda_i\setminus X_i$, this representation has linear total size as proved in Theorem~\ref{th:algoritmo(rightmost+redundant)}.  Then we use it to compute distances of $\lambda_i$'s and to list every $\lambda_i$'s efficiently.

\begin{algorithm}[H]
\SetAlgorithmName{Algorithm \texttt{ImplicitPaths}}{}{}
\renewcommand{\thealgocf}{}
 \caption{}
 \KwIn{a plane graph $U$ and a set of well-formed terminal pairs $\{(s_i,t_i)\}_{i\in[k]}$ on the external face of $U$, where $U$ is the union of $k$ shortest paths, each joining $s_i$ and $t_i$, for $i\in[k]$, in a unknown undirected planar graph $G$ with positive edge weights}
 \KwOut{an implicit representation of a set of paths $\{\lambda_1,\ldots,\lambda_k\}$, where $\lambda_i$ is a shortest $s_it_i$ path, for $i\in[k]$}
%\hspace{1.445cm} -- a function \texttt{Mark} associating values in $[k]$ to darts in $\bigcup_{i\in[k]}\lambda_i$}
{Compute $T_g$ and renumber the terminal pairs $(s_1,t_1),\ldots,(s_k,t_k)$ according to a postorder visit of $T_{g}$\label{line:T_g}\;
\For{$i=1,\ldots,k$}{
	Let $\lambda_{i}$ be the leftmost $i$-path in $U_i=\bigcap_{j\in[i-1]} \Right_{\lambda_{j}}$\label{line:lambda_{i,0}}\;
	\While{there exists a right shortcut  $\tau$ of $\lambda_{i}$ in $U_i$\label{line:shorcut_it_i}}{
	$\lambda_i:=\lambda_i\rtimes\tau$\label{line:shorcut_it_i_DOPOwhile}\;
	}
Compute $\lambda_i\setminus X_i$\;\label{line:X_i}
}	
   }
\end{algorithm}

To prove the correctness of \PIPPO in Theorem~\ref{th:algoritmo(rightmost+redundant)} we need the following preliminary lemma about properties of a set of pairwise non-crossing paths.

\begin{lemma}\label{lemma:a,b}
Let $\{\pi_i\}_{i\in[k]}$ be a set of pairwise non-crossing paths such that $\pi_i$ is an $i$-path, for $i\in[k]$. The following two statements hold:
\begin{enumerate}[label=\alph*)]\itemsep0em
\item\label{item:a} if $\ell\prec i\in[k]$, then $\pi_i\cap\pi_\ell\subseteq \bigcup_{j\in C_i}\pi_j$,
\item\label{item:b} if $i,j,\ell\in[k]$ are pairwise uncomparable, then $E(\pi_i\cap\pi_j\cap\pi_\ell)=\emptyset$.
\end{enumerate} 
\end{lemma}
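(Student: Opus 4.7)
My plan is to prove both parts by planar/topological arguments grounded in the nesting of the regions $\Left_{\pi_\cdot}$ forced by well-formedness together with the pairwise non-crossing property.

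For part~(\ref{item:a}), given $\ell\prec i$ I let $j$ be the unique child of $i$ in $T_g$ on the path from $\ell$ up to $i$ in $T_g$ (so $j=\ell$ when $\ell\in C_i$). If $j=\ell$ the statement is immediate, so assume $\ell\prec j\prec i$, and hence $\gamma_\ell\subseteq\gamma_j\subset\gamma_i$ by well-formedness. A standard non-crossing argument then yields the nesting $\pi_\ell\subseteq\Left_{\pi_j}\subseteq\Left_{\pi_i}$: the endpoints of $\pi_\ell$ lie on $\gamma_j$, which bounds $\Left_{\pi_j}$, and $\pi_\ell$ cannot escape $\Left_{\pi_j}$ because it does not cross the bounding cycle $\pi_j\cup\gamma_j$; the outer containment is analogous. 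Now I pick any $x\in\pi_i\cap\pi_\ell$. Since $x\in\pi_i$, one local side of $x$ lies outside $\Left_{\pi_i}$, and hence outside $\Left_{\pi_j}$. Since $x\in\pi_\ell\subseteq\Left_{\pi_j}$, the opposite local side of $x$ must lie in $\Left_{\pi_j}$. Therefore $x$ sits on the bounding cycle $\pi_j\cup\gamma_j$ of $\Left_{\pi_j}$, and the pendant-terminal convention from Section~\ref{section:preliminaries} rules out $x\in\gamma_j\setminus\pi_j$, so $x\in\pi_j$.

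For part~(\ref{item:b}), let $i,j,\ell$ be pairwise uncomparable, so the three arcs $\gamma_i,\gamma_j,\gamma_\ell$ are pairwise edge-disjoint. The key preliminary observation is that pairwise non-crossing together with uncomparability forces the three regions $\Left_{\pi_i},\Left_{\pi_j},\Left_{\pi_\ell}$ to have pairwise disjoint interiors: any nesting $\Left_{\pi_a}\subseteq\Left_{\pi_b}$ would drag $\gamma_a$ inside $\Left_{\pi_b}$, forcing $\gamma_a\subseteq\gamma_b$ and hence $a\prec b$, against uncomparability. Now assume, for contradiction, that $e\in E(\pi_i\cap\pi_j\cap\pi_\ell)$. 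In the plane embedding $e$ has exactly two local sides, and each of the three regions must occupy one of them; by pigeonhole two of the three regions sit on the same side of $e$, contradicting the disjointness of their interiors.

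The main technical obstacle I anticipate is making the informal ``two sides of $e$'' and ``$\pi_\ell$ cannot leave the cycle $\pi_j\cup\gamma_j$'' statements fully rigorous when paths travel together along long shared subpaths and possibly touch each other or the outer-face arcs at many vertices. I plan to handle this by applying the Jordan curve theorem to the closed curves $\pi_i\cup\gamma_i$, $\pi_j\cup\gamma_j$, $\pi_\ell\cup\gamma_\ell$, and by leaning on the pendant-terminal convention to exclude degenerate overlaps between paths and outer-face arcs. Once the local planar picture is pinned down, both (\ref{item:a}) and (\ref{item:b}) reduce to the pigeonhole-style side arguments sketched above.
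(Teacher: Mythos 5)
Your part~\ref{item:a} is essentially the paper's (very terse) argument made explicit, and it is sound in outline, but the endgame is slightly off: after you place $x$ on $\partial\Left_{\pi_j}=\pi_j\cup\gamma_j$, the pendant-terminal convention alone does not exclude $x\in\gamma_j\setminus\pi_j$ (interior vertices of $\gamma_j$ are ordinary vertices that paths may traverse; the convention only guarantees the attachment points $s_j',t_j'$ lie on $\pi_j$). What does exclude it is the symmetric confinement $\pi_i\subseteq\Right_{\pi_j}$, which follows from the same non-crossing argument you already use for $\pi_\ell\subseteq\Left_{\pi_j}$; once you have both confinements you in fact get the whole of~\ref{item:a} in one line, $\pi_i\cap\pi_\ell\subseteq\Right_{\pi_j}\cap\Left_{\pi_j}=\pi_j$, which also covers shared \emph{edges}, something your vertex-by-vertex local-side argument never addresses even though the statement is a subgraph inclusion.

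Part~\ref{item:b} is where there is a genuine gap. Your ``key preliminary observation'' --- that pairwise non-crossing plus uncomparability forces $\Left_{\pi_i},\Left_{\pi_j},\Left_{\pi_\ell}$ to have pairwise disjoint interiors --- is false for the paths allowed by the lemma (which are arbitrary non-crossing $i$-paths, not assumed shortest, and even shortest paths may use edges of $f^\infty$). The flawed inference is ``$\gamma_a\subseteq\Left_{\pi_b}$ forces $\gamma_a\subseteq\gamma_b$'': being on the outer face, $\gamma_a\subseteq\Left_{\pi_b}$ only forces $\gamma_a\subseteq\gamma_b\cup\pi_b$, and $\pi_b$ may itself run along the outer face through $\gamma_a$. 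Concretely, take $\pi_b$ consisting of the pendant edge at $s_b$, the arc $f^\infty\setminus\gamma_b$, and the pendant edge at $t_b$: it is non-crossing with any $a$-path, $a$ and $b$ are uncomparable, yet $\Left_{\pi_b}$ is all of $U$ and swallows $\Left_{\pi_a}$. With the disjointness claim gone, your pigeonhole step has nothing to contradict. The paper's proof avoids exactly this trap by a different construction: it closes each $\pi_r$ into a simple closed curve $c_r=\pi_r\circ\tau_r$ with an auxiliary edge $\tau_r$ drawn in the outer face within the sector of $\gamma_r$; since the arcs $\gamma_i,\gamma_j,\gamma_\ell$ are pairwise edge-disjoint for uncomparable indices, no $c_a$ can be enclosed by the bounded region of $c_b$ (its auxiliary edge pokes outside in a sector $c_b$ cannot enclose), and then the Jordan-curve/pigeonhole argument on a common edge goes through. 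To repair your proof you should either adopt this auxiliary-edge closure, or explicitly treat the nested case $\Left_{\pi_a}\subseteq\Left_{\pi_b}$ (where necessarily $E(\gamma_a)\subseteq E(\pi_b)$) and derive the contradiction there as well; as written, the argument does not establish~\ref{item:b}.
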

\begin{proof}
If $\ell\in C_i$, then \ref{item:a} is trivial; otherwise it is implied by the non-crossing property. To prove~\ref{item:b}, let $c_i=\pi_r\circ\tau_r$, with $r\in\{i,j,\ell\}$, where $\tau_r$ is an auxiliary edge $s_r t_r$. By hypothesis, it holds that $c_i,c_j$ and $c_\ell$ are three closed simple curves for which no curve is contained in the interior region delimited by another curve. Thus the edge intersection is empty by an easy application of Jordan curve theorem~\cite{jordan}: if edge $e$ belongs to $c_i\cap c_j$, then $e$ is in the interior region bounded by the infinite face of $c_i\cup c_j$, thus $e$ does not belong to $c_\ell$. We note that the vertex intersection may be non-empty.\end{proof}

\begin{theorem}\label{th:algoritmo(rightmost+redundant)}
Let $\{\lambda_i\}_{i\in[k]}$ be the set of paths computed by \PIPPO. Then
\begin{enumerate}[label=\thetheorem.(\arabic*), ref=\thetheorem.(\arabic*),leftmargin=\widthof{10.(1)}+\labelsep]\itemsep0em
\item\label{item:lambda_is_rightmost} $\lambda_{i}$ is the rightmost shortest $i$-path in $U_i$, for $i\in[k]$,
\item\label{item:lambda_i_non_crossing_and_single_touch} $\{\lambda_i\}_{i\in[k]}$ is a set of single-touch paths,
\item\label{item:output_linear} $\sum_{i\in[k]}|E(\lambda_i\setminus X_i)|=O(|E(U)|)$. 
\end{enumerate}
\end{theorem}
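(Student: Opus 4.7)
Proof plan. I would split the argument along the three items of the statement.

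For (1), I proceed by induction on $i$ in the postorder ordering of $T_g$. The inductive hypothesis is that, for every $j<i$, $\lambda_j$ is the rightmost shortest $j$-path in $U_j$; combined with the existence of a complete path-set (Section~\ref{section:path-sets}) and with the postorder visit of $T_g$, this yields that $U_i$ contains at least one shortest $i$-path, so $\lambda_{i,0}$ (the leftmost $i$-path in $U_i$) is well-defined. Along the iteration, every right shortcut of $\lambda_i$ in $U_i$ keeps $\lambda_i\subseteq U_i$ (because $\Right_{\lambda_i}\subseteq U_i$ whenever $\lambda_i\subseteq U_i$), and the weight of $\lambda_i$ is non-increasing by Definition~\ref{def:shortcut}. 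At termination $\lambda_i$ has no right shortcut in $U_i$; since every right shortcut of $\lambda_i$ in $U$ must sit in $\Right_{\lambda_i}\subseteq U_i$, $\lambda_i$ has no right shortcut in $U$ at all, and Corollary~\ref{cor:right_shortcut} tells us that no $i$-path distinct from $\lambda_i$ contained in $\Right_{\lambda_i}$ has weight $\leq w(\lambda_i)$. Let $\sigma$ be the rightmost shortest $i$-path in $U_i$. By the envelope identity
\[
w(\sigma RE \lambda_i)+w(\sigma LE \lambda_i)=w(\sigma)+w(\lambda_i),
\]
which follows from inclusion--exclusion on edge weights, Corollary~\ref{cor:right_shortcut} forces $\sigma RE \lambda_i=\lambda_i$ (otherwise $w(\sigma RE\lambda_i)>w(\lambda_i)\geq w(\sigma)$ would push $w(\sigma LE\lambda_i)<w(\sigma)$, contradicting that $\sigma$ is shortest in $U_i$ and $\sigma LE\lambda_i\subseteq U_i$), so $\sigma\subseteq\Left_{\lambda_i}$. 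The converse containment $\lambda_i\subseteq\Left_\sigma$ is a separate invariant maintained through the loop: if $\lambda_{i,j}\subseteq\Left_\sigma$ and $\tau$ is a right shortcut of $\lambda_{i,j}$ contained in the boundary of a face $f$ adjacent to $\lambda_{i,j}$, then since $\sigma$ cannot cross the interior of $f$, the presence of edges of $\lambda_{i,j}\subseteq\Left_\sigma$ on $\partial f$ forces $f\subseteq\Left_\sigma$, hence $\tau\subseteq\Left_\sigma$ and $\lambda_{i,j+1}\subseteq\Left_\sigma$. Combining the two containments yields $\lambda_i=\sigma$.

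For (2), I argue the single-touch property case by case on the relative position of $i$ and $j$ in $T_g$. If $i$ and $j$ are $\prec$-incomparable, the postorder ensures that, say, $\lambda_j$ is already fixed when $\lambda_i$ is constructed, and the invariant $\lambda_i\subseteq U_i\subseteq\Right_{\lambda_j}$ combined with the well-formedness of the terminal pairs on $f^\infty$ forces $\lambda_i\cap\lambda_j=\emptyset$. If $i$ and $j$ are $\prec$-comparable, the single-touch property could fail only if $\lambda_i\cap\lambda_j$ consisted of two disjoint maximal subpaths; but the face trapped between two such consecutive common pieces would supply a right shortcut for whichever of the two paths plays the outer role, contradicting the termination criterion established in (1).

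For (3), I use a charging argument over the edges of $U$. For each $e\in E(U)$ let $I(e)=\{i\in[k]\,:\,e\in\lambda_i\}$. By Lemma~\ref{lemma:a,b}\ref{item:a}, whenever $i\in I(e)$ has a descendant in $I(e)$, some child of $i$ also belongs to $I(e)$ and hence $e\in X_i$; therefore $i$ contributes $e$ to $\lambda_i\setminus X_i$ only when $i$ is $\prec$-minimal in $I(e)$. Lemma~\ref{lemma:a,b}\ref{item:b} bounds the number of pairwise $\prec$-incomparable elements of $I(e)$ by two, so each edge contributes at most twice to $\sum_i|E(\lambda_i\setminus X_i)|$, giving the $O(|E(U)|)$ bound. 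The main obstacle I expect is item (1): pinning $\lambda_i$ down simultaneously from the right (via Corollary~\ref{cor:right_shortcut} and the envelope identity) and from the left (via the planar-face invariant that right-shortcut operations never push $\lambda_i$ past $\sigma$); once that double containment is secured, (2) and (3) follow from structural observations on $T_g$ and from Lemma~\ref{lemma:a,b}.
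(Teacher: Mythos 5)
Your overall architecture is the same as the paper's: prove \ref{item:lambda_is_rightmost} by induction over the postorder visit together with Corollary~\ref{cor:right_shortcut}, derive \ref{item:lambda_i_non_crossing_and_single_touch} from \ref{item:lambda_is_rightmost}, and prove \ref{item:output_linear} by charging each edge via Lemma~\ref{lemma:a,b}; in particular your treatment of \ref{item:output_linear} is correct and is essentially the paper's argument, written out in more detail (only $\prec$-minimal indices of $I(e)$ can contribute $e$, and by item~\ref{item:b} of Lemma~\ref{lemma:a,b} there are at most two of them). The problems are in \ref{item:lambda_is_rightmost} and \ref{item:lambda_i_non_crossing_and_single_touch}. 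In \ref{item:lambda_is_rightmost}, two steps do not hold as stated. First, the ``envelope identity'' is not an identity: an edge shared by $\sigma$ and $\lambda_i$ can be bypassed by both envelopes, so in general one only has $w(\sigma RE \lambda_i)+w(\sigma LE \lambda_i)\leq w(\sigma)+w(\lambda_i)$; this is harmless, because the inequality is the direction your exchange needs. Second, and more seriously, the containment $\Right_{\lambda_i}\subseteq U_i$ is false in general: if some already processed $j$ is incomparable to $i$, the pocket $\Left_{\lambda_j}$ lies (weakly) to the right of $\lambda_i$ but outside $U_i=\bigcap_{j\in[i-1]}\Right_{\lambda_j}$. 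Hence ``no right shortcut of $\lambda_i$ in $U_i$'' does not yield ``no right shortcut in $U$'', and you cannot invoke Corollary~\ref{cor:right_shortcut} over all of $\Right_{\lambda_i}$ as you do; moreover you never rule out a shorter $i$-path that leaves $U_i$ (your well-definedness remark asserts, but does not prove, that $U_i$ still contains a shortest $i$-path of $U$). The paper closes exactly this hole in its induction step: an $i$-path shorter than $\lambda_{i,0}$ and to the left of it would have to cross some previously fixed $\lambda_j$, and an exchange then contradicts that $\lambda_j$ is a (rightmost) shortest $j$-path. Some argument of this kind is indispensable and is missing from your plan.

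Item \ref{item:lambda_i_non_crossing_and_single_touch} also has genuine gaps. For incomparable $i,j$, the claim $\lambda_i\cap\lambda_j=\emptyset$ is false: $\Right_{\lambda_j}$ is a closed region whose boundary contains $\lambda_j$, so $\lambda_i\subseteq\Right_{\lambda_j}$ still allows the two paths to share vertices and even edges (for instance a bridge of $U$ that both pairs are forced to traverse); what must be excluded is an intersection consisting of two or more maximal pieces, and disjointness is not available to do it. For comparable $i,j$, the region trapped between two consecutive maximal common subpaths need not be a face of $U$, so it does not directly supply a shortcut in the sense of Definition~\ref{def:shortcut}; extracting a shortcut from such a region is essentially the content of Theorem~\ref{th_shortcut}'s proof, not a one-line observation. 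The robust route, and the one the paper intends, works uniformly in both cases: if $\lambda_i$ and $\lambda_j$ were not single-touch, a right envelope built from the offending pieces would give an $i$-path (or $j$-path) of no larger weight lying strictly to the right of $\lambda_i$ (or $\lambda_j$), contradicting \ref{item:lambda_is_rightmost}. You should replace your two cases by this exchange argument.
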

\begin{proof}
We proceed by induction to prove the first statement. Trivially $\lambda_1$ is the rightmost shortest path in $U_1$ because of Corollary~\ref{cor:right_shortcut}. Let us assume that $\lambda_j$ is the rightmost shortest $j$-path in $U_j$, for $j\in[i-1]$, we have to prove that $\lambda_i$ is the rightmost shortest $i$-path in $U_i$.

In Line~\ref{line:lambda_{i,0}}, we initialize $\lambda_i$ as the leftmost $i$-path in $U_i$. By induction, at this step, there does not exist a path $p$ to the left of $\lambda_i$ shorter than $\lambda_i$. Otherwise $\lambda_i$ would cross a path $\lambda_j$, for some $j<i$, implying that $\lambda_j$ is not a shortest $j$-path. We conclude by the while cycle in Line~\ref{line:shorcut_it_i} and Corollary~\ref{cor:right_shortcut}.

Statement~\ref{item:lambda_i_non_crossing_and_single_touch} follows from~\ref{item:lambda_is_rightmost}; indeed, if $\lambda_i$ and $\lambda_j$ are not single-touch, for some $i,j\in[k]$, then~\ref{item:lambda_is_rightmost} is denied either for $\lambda_i$ or for $\lambda_j$.

We prove~\ref{item:output_linear} by showing that every edge $e$ is contained in at most two $\lambda_i$'s. Let us assume by contradiction that there exist $e\in E(G)$ and $i,j,\ell\in[k]$ satisfying $e\in(\lambda_i\setminus X_i)\cap(\lambda_j\setminus X_j)\cap(\lambda_\ell\setminus X_\ell)$. Then one among~\ref{item:a} and~\ref{item:b} in Lemma~\ref{lemma:a,b} is denied, absurdum.
\end{proof}

\section{Implementing \PIPPO in linear time}
\label{section:computational_complexity_PIPPO}

The main result of this section is explained in Theorem~\ref{th:PIPPO's_cost} which proves that \PIPPO can be executed in linear time. The section is split in four parts. In Subsection~\ref{sub:binarization} we binarize the genealogy tree in order to simplify the notation. In Subsection~\ref{sub:lambda} we show how to deal with $X_i$'s by using lists. In Subsection~\ref{sub:shortcut_linear_time} we prove that all shortcuts can be found in linear time and, finally, in Subsection~\ref{sub:computational_complexity} we prove Theorem~\ref{th:PIPPO's_cost}.

%This section is split into three parts. In~\ref{sub:lambda} we address the complexity of computing $\lambda_i$ in Line~\ref{line:lambda_{i,0}} for all $i\in[k]$. In~\ref{sub:shortcuts} we show how to compute shortcuts, and, finally, in~\ref{sub:cost_PIPPO} we show that \PIPPO requires $O(|U|)$ time.

\subsection{Binarization of the Genealogy tree}\label{sub:binarization}

In order to simplify the treatment, we can assume that $T_g$ is binary in the following way. Given $i,j\in[k]$, we say that $\ell\lhd j$ if $s_1,s_j,t_j,s_\ell,t_\ell,t_1$ appear in this order on $f^\infty$. Given $i\in[k]$ satisfying $|C_i|=r\geq3$, we order its set of children $C_i=(c_1,\ldots,c_r)$, so that $c_j\lhd c_{j+1}$ for $j\in[r-1]$. If we add another couple $(s_{k+1},t_{k+1})$ of terminal pair so that $s_{k+1}=c_2$ and $t_{k+1}=t_r$, then $i$ has only two children $c_1$ and $k+1$. By repeating this argument, we obtain the binarization of $T_g$.

We must do some observations. Being $U$ connected, then there exists an $s_{k+1}t_{k+1}$ path that does not cross other paths. Moreover, this path is a shortest path in $U$ but it might be not a shortest path in $G$. We do not care about this because it is an auxiliary path. By repeating this reasoning to all $i\in[k]$ that have more than two children, we can assume that $T_g$ is binary. Finally, note that the number of terminal pairs becomes at most $2k$. In Figure~\ref{fig:gt_binary} the binarization of $T_g$ in Figure~\ref{fig:gt}.

\begin{figure}[h]
\captionsetup[subfigure]{justification=centering}
\centering
%FIGURA 1
	\begin{subfigure}{4cm}
\begin{overpic}[width=6cm,percent]{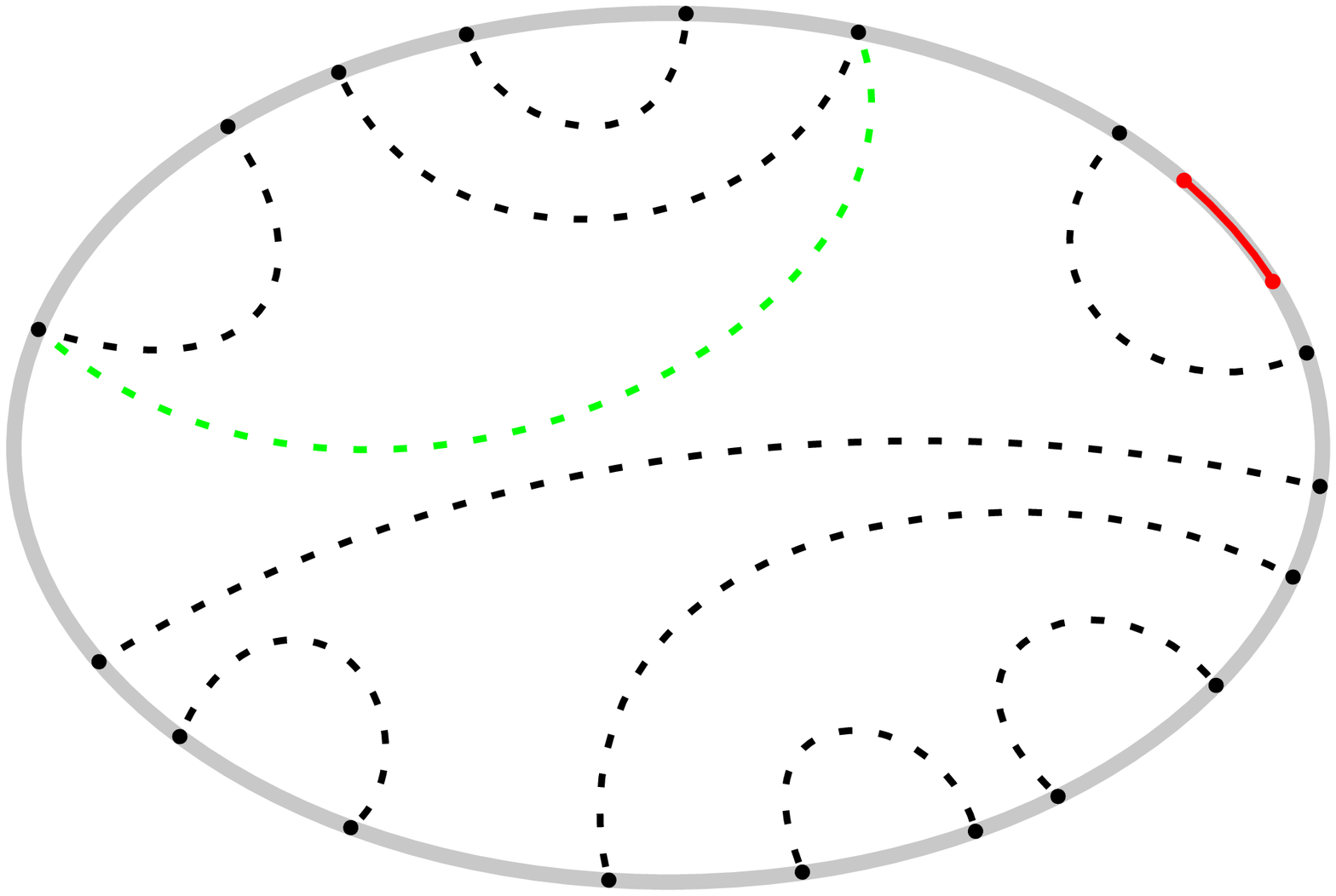}
\put(91,51){\color{red} $e^*$}
\put(-4.5,32){$f^\infty_G$}

\put(82,58){$t_1$}
\put(96,42){$s_1$}
\put(97.5,30.5){$s_2$}
\put(7,16.5){$t_2$}
\put(95.4,24){$s_3$}
\put(44.5,0){$t_3$}
%\put(93.5,20.5){$s_3$}
%\put(52.5,0.5){$t_3$}
\put(89.9,16.5){$s_4$}
\put(79,7){$t_4$}
\put(70.2,4.2){$s_5$}
\put(60,0.5){$t_5$}
\put(25,5){$s_6$}
\put(15,10){$t_6$}
\put(-13.5,43.5){$\color{green}{s_7}$ $=s_8$}
\put(15,59.5){$t_8$}
\put(24,63.5){$s_9$}
\put(64,65.6){$t_9=$ $\color{green}{t_7}$}
\put(34,66.5){$s_{10}$}
\put(50,67.5){$t_{10}$}

\end{overpic}
%\caption{}\label{LABEL1}
\end{subfigure}
\qquad\qquad\qquad\qquad\qquad\quad
%FIGURA 2
	\begin{subfigure}{4cm}
\begin{overpic}[width=4cm,percent]{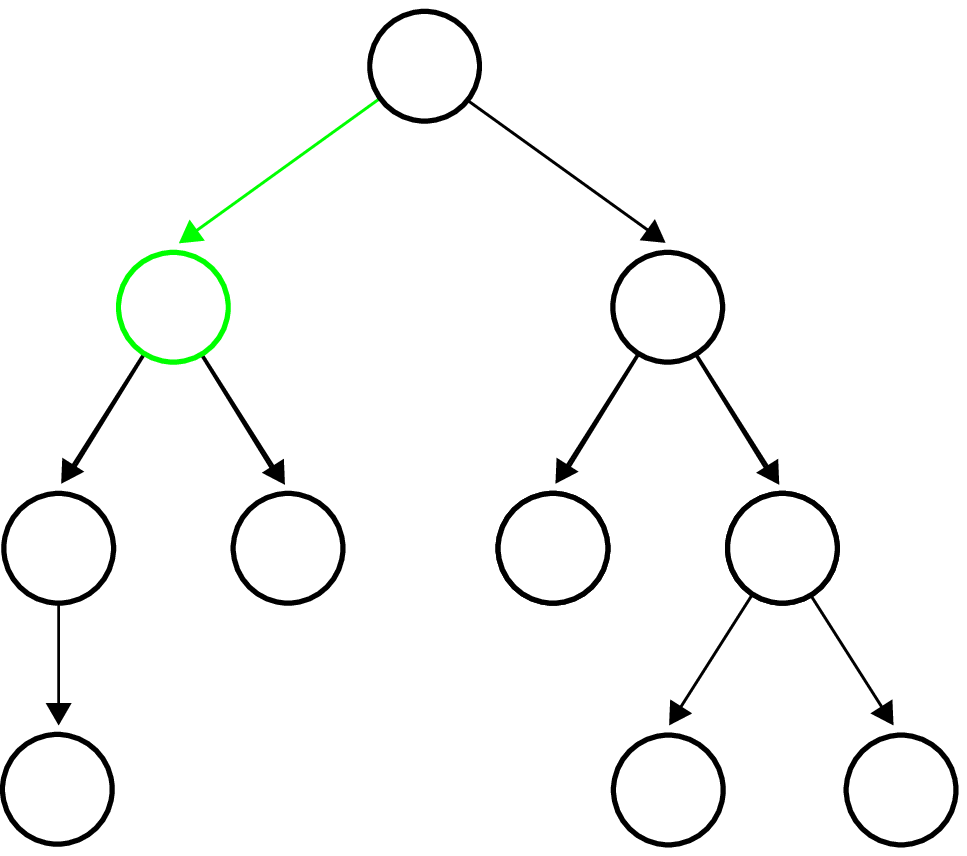}
\put(42.2,79.2){$1$}
\put(16.2,54){$7$}
\put(67.5,54){$2$}
\put(4.2,28.5){$9$}
\put(28.2,28.5){$8$}
\put(55.4,28.5){$6$}
\put(79.6,28.5){$3$}
\put(67.5,3.5){$5$}
\put(91.7,3.5){$4$}
\put(1.8,3.5){$10$}
\end{overpic}
%\caption{}\label{LABEL2}
\end{subfigure}
\caption{how to binarize the genealogy tree in Figure~\ref{fig:gt} by adding a couple of terminal pair (in green).}
\label{fig:gt_binary}
\end{figure}

\subsection{Dealing with $X_i$'s by using lists}\label{sub:lambda}

We note that $\sum_{i\in[k]}X_i$ might be $\Theta(|E(U)|^2)$. In this subsection we show how to deal with all $X_i$'s by using lists $L_{i,j}$'s (see Definition~\ref{def:baffetti}). This allows us to obtain $\lambda_i\setminus X_i$ at iteration $i$ in a time proportional to the number of new edges visited at iteration $i$ (see proof of Theorem~\ref{th:PIPPO's_cost}).

For $i\in[k]$, we define $\lambda_{i,0}$ as the initial $\lambda_i$ computed in Line~\ref{line:lambda_{i,0}}, and we define $\lambda_{i,j}$, as $\lambda_i$ after Line~\ref{line:shorcut_it_i_DOPOwhile} has been executed $j$ times,  i.e., after $j$ shortcuts have been considered. Moreover, we define $X_{i,j}=\lambda_{i,j}\cap\bigcup_{\ell\in C_i}\lambda_\ell$. Finally, let $M_i$ be the number of times that \PIPPO executes Line~\ref{line:shorcut_it_i_DOPOwhile} at iteration $i$, i.e., $\lambda_i=\lambda_{i,M_i}$.

For each $\lambda_{i,j}$, with $i\in[k]$ and $j\leq M_i$, we define now the sequence $L_{i,j}$ of edges that emanate right from vertices in $\lambda_{i,j}$, ordered as they are encountered walking on $\lambda_{i,j}$ from $s_{i}$ to $t_{i}$, see Definition~\ref{def:baffetti} and Figure~\ref{fig:baffetti}.

\begin{definition}
\label{def:baffetti}
Given $i\in[k]$ and $j\leq M_i$, we define $L_{i,j}=(e_1,e_2,\ldots,e_r)$ the list of edges in 
$R=\{xy\in \Right_{\lambda_{i,j}}\,|\,x\in V(\lambda_{i,j}) \text{ and } y\not\in V(\lambda_{i,j})\}$
%
%$\ddd{\Right_i}$ with a tail in $\lambda_i$
ordered according to a rightmost visit from $s_i$ to $t_i$.
%to vertices in $\Right_i$ neighboring to $\lambda_i$.\todob{sistemare i neighboring}
\end{definition}

%\noindent
In order to simplify the notation, for any $i\in[k]$, we denote $L_{i,M_i}$ by $L_i$. Figure~\ref{fig:baffetti} shows an example of Definition~\ref{def:baffetti}. Edges in $L_i$ are drawn in dotted lines.

\begin{figure}[h]
\centering
\begin{overpic}[width=4.5cm,percent]{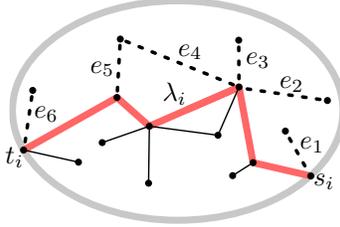}
     
\put(90,18){$s_i$}
\put(-1,24.5){$t_i$}
\put(45.5,43){$\lambda_i$}

\put(86,29){$e_1$}
\put(80,47){$e_2$}
\put(70,54){$e_3$}
\put(50,56.5){$e_4$}
\put(24,51.5){$e_5$}
\put(7.5,38){$e_6$}

\end{overpic}
\caption{an $i$-path  $\lambda_i$ in red and the set of edges $L_i=(e_1,e_2,e_3,e_4,e_5,e_6)$ in dotted lines. Edges incident on $\lambda$ in $\Left_{\lambda_i}$ are drawn in solid lines.}
\label{fig:baffetti}
\end{figure}

\begin{lemma}\label{lemma:left-left}
All edges in $\Left_{\lambda_{i,0}}\cap \bigcap_{j\in C_i}\Right_{j}$ are in $\lambda_{i,0}\cup \bigcup_{j\in C_i} \lambda_j$.
\end{lemma}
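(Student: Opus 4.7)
The plan is to argue by contradiction. Suppose there exists an edge $e\subseteq\Left_{\lambda_{i,0}}\cap\bigcap_{j\in C_i}\Right_{\lambda_j}$ with $e\notin\lambda_{i,0}\cup\bigcup_{j\in C_i}\lambda_j$; I will use $e$ to exhibit an $i$-path in $U_i$ lying strictly to the left of $\lambda_{i,0}$, contradicting the defining property of $\lambda_{i,0}$ (Line~\ref{line:lambda_{i,0}} of \PIPPO) as the leftmost $i$-path in $U_i$.

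First I would describe $R:=\Left_{\lambda_{i,0}}\cap\bigcap_{j\in C_i}\Right_{\lambda_j}$ topologically. Its boundary, read cyclically, is $\lambda_{i,0}$ from $s_i$ to $t_i$ on one side and, on the other side, a concatenation of the paths $\lambda_j$, $j\in C_i$ (one per child interval $\gamma_j$), interleaved with the arcs of $\gamma_i$ that join consecutive child endpoints. The hypothesis $e\subseteq\Right_{\lambda_j}$ for every $j\in C_i$ keeps $e$ out of the interior of each $\Left_{\lambda_j}$, and $e\notin\lambda_{i,0}\cup\bigcup_{j\in C_i}\lambda_j$ forbids $e$ from lying on the corresponding boundary segments. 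Hence $e$ either lies strictly inside $R$ or sits on one of the $\gamma_i$-arcs of $\partial R$.

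Next, using that $U$ is a union of shortest paths whose endpoints lie on $f^\infty$, let $\pi\subseteq U$ be such a path containing $e$. Tracing $\pi$ from $e$ in both directions, I argue by planarity that $\pi$ cannot cross any $\lambda_j$, $j\in C_i$, while still respecting $e\in\Right_{\lambda_j}$: such a crossing would force the portion of $\pi$ near $e$ into the interior of $\Left_{\lambda_j}$. Consequently the trace exits $R$ either at a vertex of $\lambda_{i,0}$ or on a $\gamma_i$-arc of $\partial R$; in the latter case I follow $\gamma_i$ along $f^\infty$ to the nearest endpoint of $\lambda_{i,0}$ to close the detour. This produces two vertices $a,b\in\lambda_{i,0}$ and a subpath $\sigma\subseteq R$ from $a$ to $b$ passing through $e$. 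The path $\lambda_{i,0}\rtimes\sigma$ is then an $i$-path in $U_i$ coinciding with $\lambda_{i,0}$ outside $[a,b]$ and, since $\sigma$ visits an interior point of $\Left_{\lambda_{i,0}}$, lying strictly to the left of $\lambda_{i,0}[a,b]$, contradicting the leftmost property of $\lambda_{i,0}$.

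The main obstacle is the topological argument that guarantees the existence of the detour $\sigma$ inside $R$ and that $\lambda_{i,0}\rtimes\sigma$ remains inside $U_i$, i.e., does not enter $\Left_{\lambda_m}$ for any $m<i$. I expect this to follow from a routine case split: for $m\in C_i$ the hypothesis on $e$ rules out a crossing, for $m$ incomparable to $i$ in $T_g$ the edge-disjointness of $\gamma_m$ and $\gamma_i$ confines $\lambda_m$ to a region disjoint from $R$, and for $m$ a proper descendant of some $j\in C_i$ the inductive assertion of Theorem~\ref{th:algoritmo(rightmost+redundant)} gives $\lambda_m\subseteq\Left_{\lambda_j}$, whose interior is disjoint from $R$.
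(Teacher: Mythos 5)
There is a genuine gap, and it sits exactly where the lemma's real content lies. Your argument is purely topological: the only contradiction you ever draw is with the leftmostness of $\lambda_{i,0}$, and to reach it you claim that the path $\pi\subseteq U$ through $e$ ``cannot cross any $\lambda_j$, $j\in C_i$, while still respecting $e\in\Right_{\lambda_j}$.'' That inference is unsound: $e\in\Right_{\lambda_j}$ constrains only where $e$ sits, not the rest of $\pi$; a crossing of $\lambda_j$ far from $e$ sends the far portion of $\pi$, not the portion near $e$, into $\Left_{\lambda_j}$. Worse, in the typical bad case the path containing $e$ serves a terminal pair $\ell$ that is a descendant of a child $j\in C_i$: both its endpoints lie on $\gamma_\ell\subseteq\gamma_j$, so starting from $e\in\Right_{\lambda_j}$ the path \emph{must} leave the region $R$ across $\lambda_j$ in both directions. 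Your trace then terminates on $\lambda_j$, not on $\lambda_{i,0}$ or on a $\gamma_i$-arc, there is no detour $\sigma$ with endpoints on $\lambda_{i,0}$ to splice, and no contradiction with the leftmostness of $\lambda_{i,0}$ is available. (A secondary issue: when the trace exits on a $\gamma_i$-arc, ``following $\gamma_i$ to the nearest endpoint of $\lambda_{i,0}$'' may route the detour through edges interior to some $\gamma_j$, which lie in $\Left_{\lambda_j}$ and hence outside $U_i$.)

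What is missing is precisely the metric ingredient: the statement is not true for an arbitrary union of non-crossing curves, so no purely planarity-based trace can prove it. The paper's proof takes a complete path-set, picks a shortest path $\sigma_\ell$ containing $e$, and splits on the genealogy of $\ell$. When $\ell$ is a descendant of a child $j\in C_i$ (the case your proposal cannot handle), the contradiction is not with $\lambda_{i,0}$ at all but with \ref{item:lambda_is_rightmost}: the right envelope $\lambda_j\, RE\, \sigma_\ell$ is again a \emph{shortest} $j$-path --- here one uses that crossing subpaths of two shortest paths between common vertices have equal weight --- and it lies strictly to the right of $\lambda_j$, contradicting that $\lambda_j$ is the rightmost shortest $j$-path in $U_j$. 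Only when $\ell$ is an ancestor of $i$ or incomparable to $i$ does the argument reduce to a leftmost-type contradiction for $\lambda_{i,0}$ (via the left envelope $\lambda_{i,0}\, LE\, \sigma_\ell$), which is the branch your construction roughly mirrors. To repair your proof you would need to add the envelope/exchange argument and the case distinction on where the path containing $e$ comes from in $T_g$; at that point you have essentially reconstructed the paper's proof.
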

\begin{proof}
Let us assume by contradiction that there exists $e\in\Left_{\lambda_{i,0}}\cap \bigcap_{j\in C_i}\Right_{j}$ satisfying $e\not\in\lambda_{i,0}\cup \bigcup_{j\in C_i} \lambda_j$. Let $P=\{\sigma_1,\ldots,\sigma_k\}\in\mathcal{P}$ be a complete path-set and let us assume that $\sigma_j$ is a shortest $j$-path, for all $j\in[k]$. Let $\ell\in[k]$ satisfy $e\in\sigma_\ell$.

Now, if $\ell$ is a descendant of $i$, then $\ell$ is a descendant of $j$, for some $j\in C_i$ (possibly, $\ell=j$). Thus $\lambda_j RE \sigma_\ell$ is a shortest $j$-path right of $\lambda_j$, absurdum because of~\ref{item:lambda_is_rightmost}. Otherwise, $\ell$ is an ancestor of $i$ or they are uncomparable, in both cases $\lambda_{i,0} LE \sigma_\ell$ is a shortest $i$-path left of $\lambda_{i,0}$, absurdum because of definition of $\lambda_{i,0}$.
\end{proof}

The single-touch property implies that the intersection among $L_{i,j}$ and $L_\ell$, for $\ell\in C_i$ and $j\leq M_i$, is always a sublist of $L_\ell$. We recall that $i$ has at most two children in $T_g$. Thanks to Lemma~\ref{lemma:left-left} and connection of graph $U$, there are few cases and they are all distinguishable by looking to $L_j$'s for $j\in C_i$ as explained in the following remark.

\begin{remark}\label{remark:lists}
The following statements hold for all $i\in[k]$:
\begin{itemize}\itemsep0em
\item if $i$ has only one child $j$, then $|L_j|\geq2$ and $\uuu{\lambda_{i,0}}$ passes through the first and last edge of $L_j$, see Figure~\ref{fig:lambda}.(\subref{fig:lambda_1})
\item if $i$ has two children $j$ and $\ell$ without vertex intersection, then $|L_j|\geq2$, $|L_\ell|\geq2$ and $\uuu{\lambda_{i,0}}$ passes through the first and last edge of $L_j$ and $L_\ell$, see Figure~\ref{fig:lambda}.(\subref{fig:lambda_2})
\item if $i$ has two children $j$ and $\ell$, $\ell\lhd j$, with non-empty vertex intersection then $|L_\ell|\geq2$ and $|L_j|\geq2$. There are three cases
\begin{itemize}\itemsep0em
\item if $|L_j|>2$ and $|L_\ell|>2$, then $\uuu{\lambda_{i,0}}$ passes through the first and penultimate edge of $L_j$ and through the second and last edge of $L_\ell$, see Figure~\ref{fig:lambda}.(\subref{fig:lambda_3})
\item if $|L_j|>2$ and $|L_\ell|=2$, then $\uuu{\lambda_{i,0}}$ passes through the first and third last edge of $L_j$, see Figure~\ref{fig:lambda}.(\subref{fig:lambda_4})
\item if $|L_j|=2$ and $|L_\ell|>2$, then $\uuu{\lambda_{i,0}}$ passes through the third and last edge of $L_\ell$, it is symmetric to Figure~\ref{fig:lambda}.(\subref{fig:lambda_4}).
\end{itemize}
\end{itemize}
\end{remark}

\begin{figure}[h]
\captionsetup[subfigure]{justification=centering}
\centering
%FIGURA 1
	\begin{subfigure}{4.5cm}
\begin{overpic}[width=4.5cm,percent]{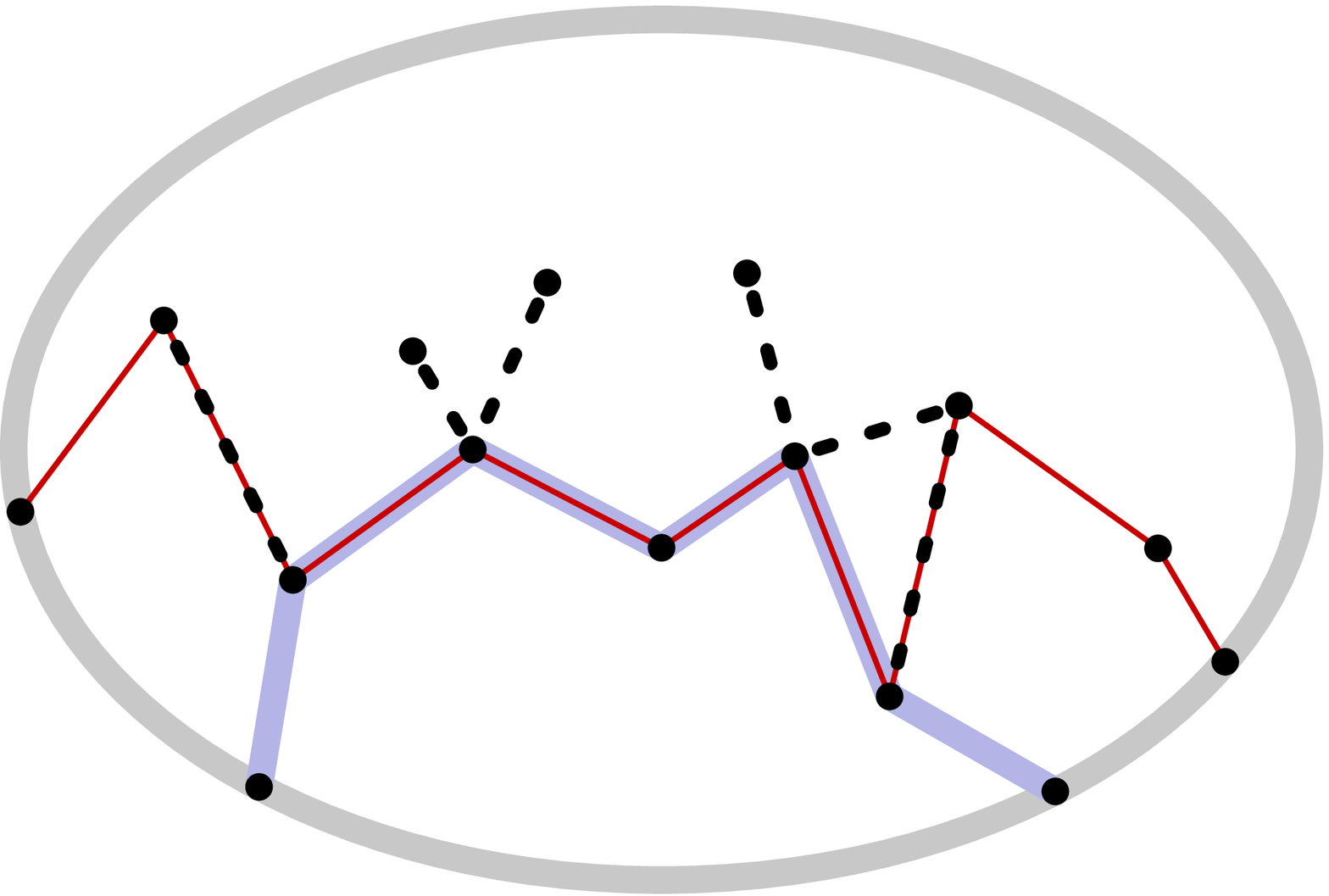}
\put(93,12){$s_i$}
\put(-5,26){$t_i$}
\put(80,3){$s_j$}
\put(17,1){$t_j$}
\end{overpic}
\caption{}\label{fig:lambda_1}
\end{subfigure}
\qquad
%FIGURA 2
	\begin{subfigure}{4.5cm}
\begin{overpic}[width=4.5cm,percent]{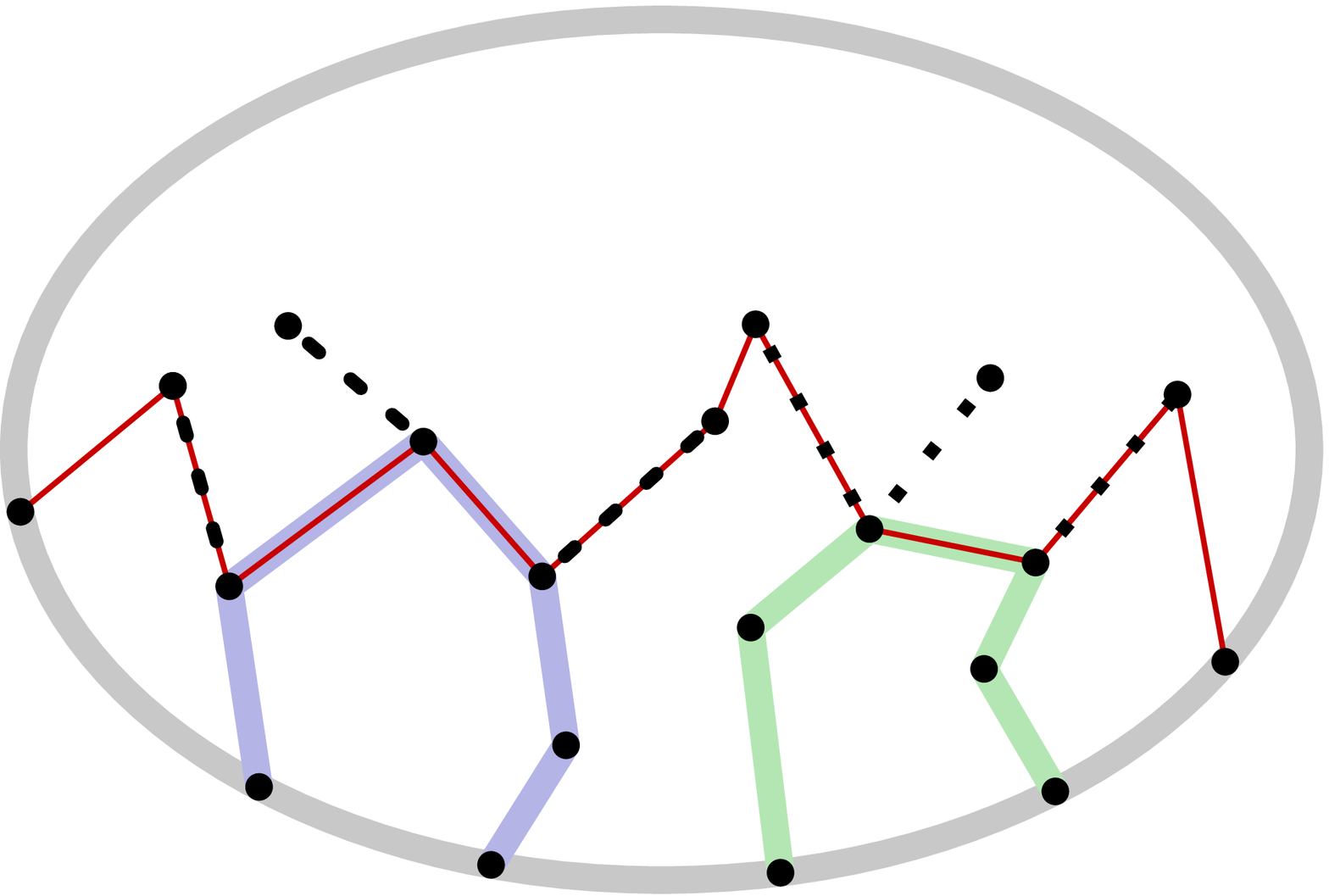}

\put(93,12){$s_i$}
\put(-5,26){$t_i$}
\put(80,3){$s_j$}
\put(58,-5){$t_j$}
\put(34.5,-3.5){$s_\ell$}
\put(17,1){$t_\ell$}

\end{overpic}
\caption{}\label{fig:lambda_2}
\end{subfigure}
\qquad\qquad
%\vspace*{4mm}
%FIGURA 3
	\begin{subfigure}{4.5cm}
\begin{overpic}[width=4.5cm,percent]{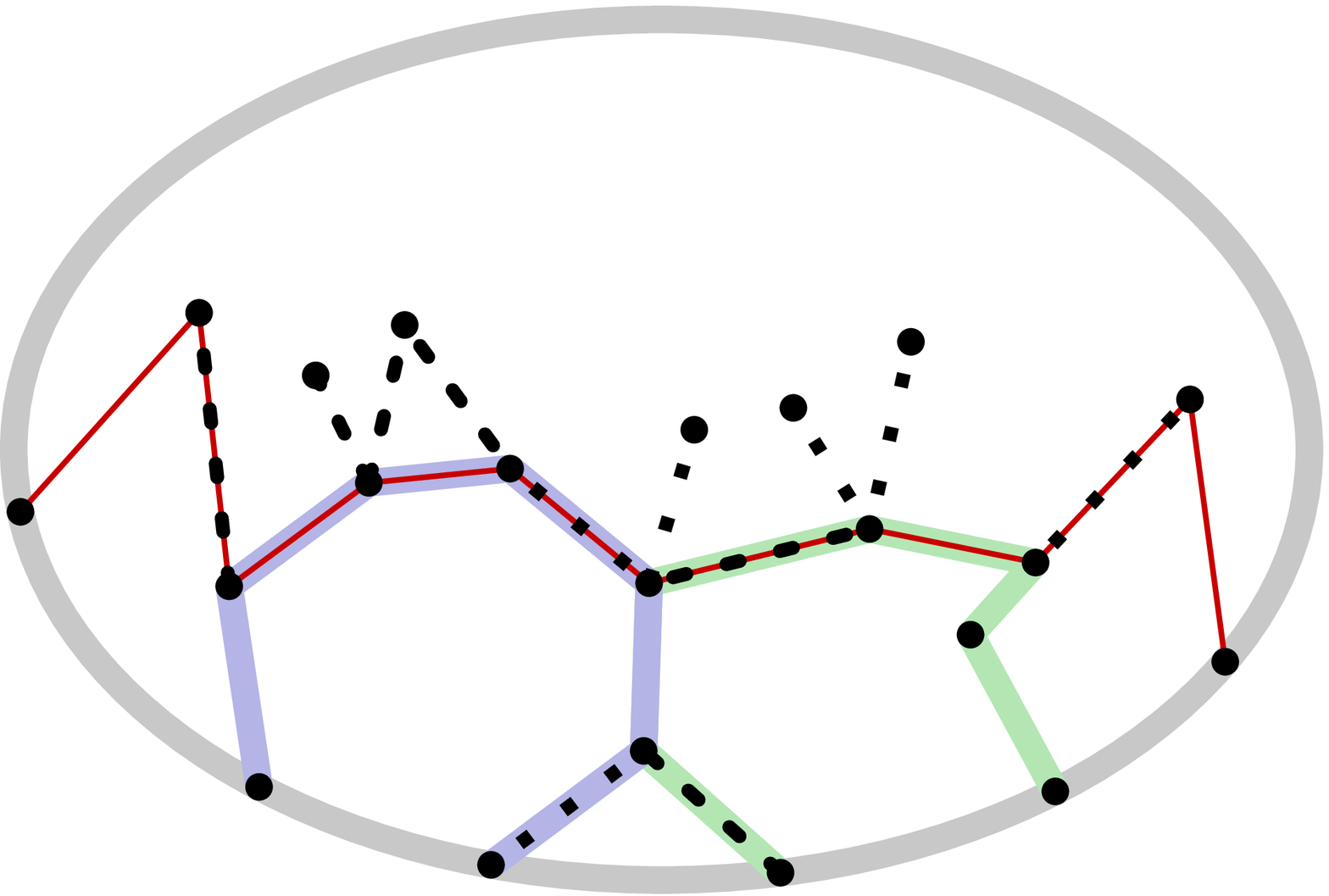}
\put(93,12){$s_i$}
\put(-5,26){$t_i$}
\put(80,3){$s_j$}
\put(58,-5){$t_j$}
\put(34.5,-3.5){$s_\ell$}
\put(17,1){$t_\ell$}
\end{overpic}
\caption{}\label{fig:lambda_3}
\end{subfigure}
\qquad
%FIGURA 4
	\begin{subfigure}{4.5cm}
\begin{overpic}[width=4.5cm,percent]{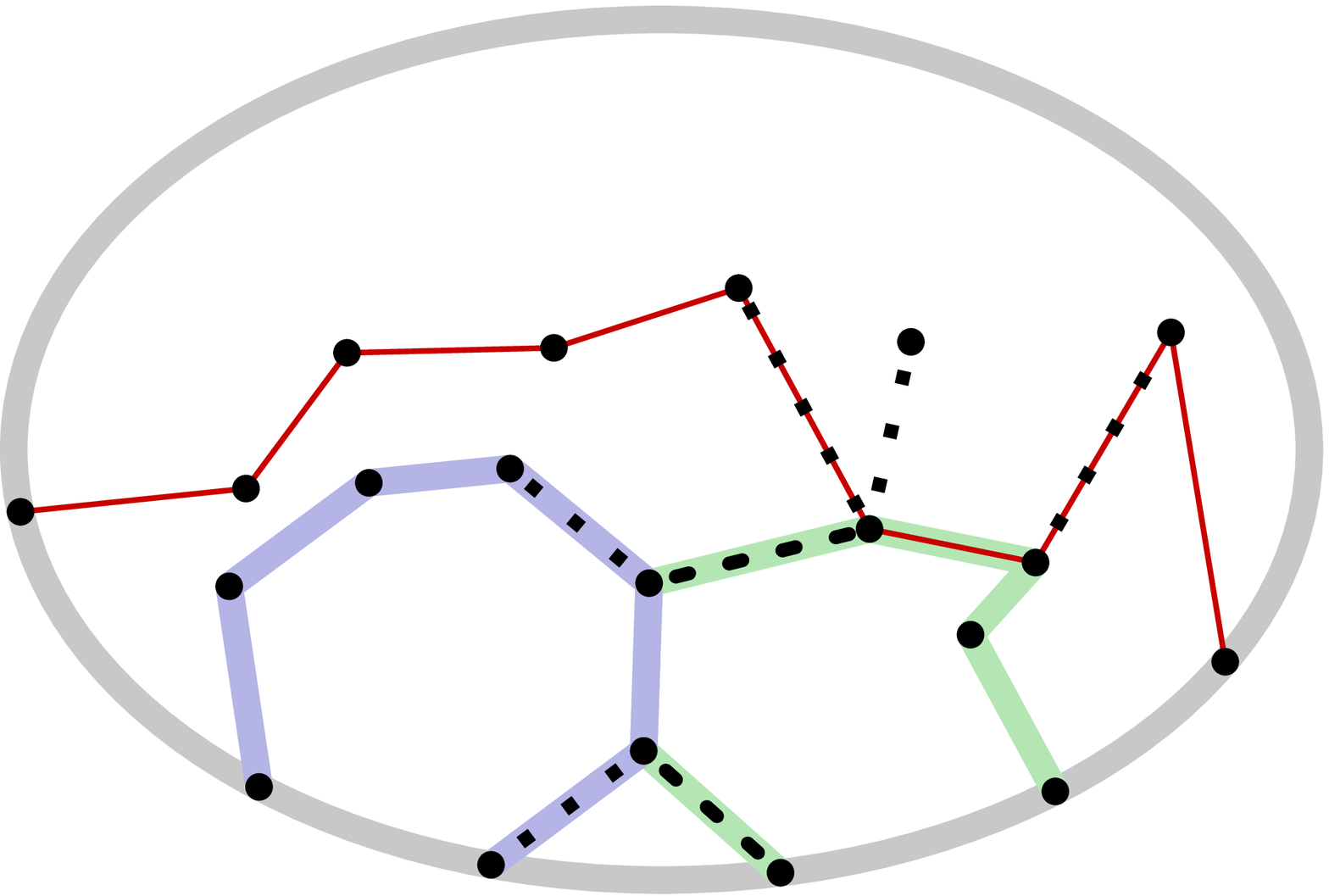}
\put(93,12){$s_i$}
\put(-5,26){$t_i$}
\put(80,3){$s_j$}
\put(58,-5){$t_j$}
\put(34.5,-3.5){$s_\ell$}
\put(17,1){$t_\ell$}
\end{overpic}
\caption{}\label{fig:lambda_4}
\end{subfigure}
  \caption{the four cases described in Remark~\ref{remark:lists}, $\lambda_{i,0}$ is drawn in red.}\label{fig:lambda}
\end{figure}

\subsection{Computing all shortcuts}
\label{sub:shortcut_linear_time}

In this subsection we show how to compute all shortcuts of \PIPPO in linear time (see Lemma~\ref{lemma:shortcuts_linear}). The following lemma is a preliminary result and it shows that in a particular case we know the existence of a right shortcut.

\begin{lemma}\label{lemma:shortcuts}
Let $i\in[k]$, $j<M_i$ and $f\in\mathcal{F}$. If $\lambda_{i,j}\cap \partial f$ is not a path and $\partial f\subseteq\Right_{\lambda_{i,j}}$, then there exists a right shortcut for $\lambda_{i,j}$ that is a subpath of $\partial f$.
\end{lemma}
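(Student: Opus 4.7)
The plan is to produce a right shortcut of $\lambda_{i,j}$ inside $\partial f$ by combining the multi-component hypothesis with a structural argument modeled on the proof of Theorem~\ref{th_shortcut}. First I observe that since $\lambda_{i,j}$ is a simple path, $\lambda_{i,j}\cap\partial f$ is a disjoint union of subpaths (possibly single vertices); the assumption that it is not a path then forces at least two connected components. Walking around the cycle $\partial f$, two cyclically consecutive such components are separated by an arc $q$ of $\partial f$ whose endpoints $u,v$ lie in $V(\lambda_{i,j})$ but whose interior vertices are disjoint from $V(\lambda_{i,j})$. The inclusion $\partial f\subseteq\Right_{\lambda_{i,j}}$ forces $q\subseteq\Right_{\lambda_{i,j}}$, and $q\not\subseteq\lambda_{i,j}$ by construction; it remains to verify the inequality $w(q)\leq w(\lambda_{i,j}[u,v])$, after which $q$ is a right shortcut contained in $\partial f$ by Definition~\ref{def:shortcut}.

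For the weight inequality I would mirror the region-shrinking argument of Theorem~\ref{th_shortcut}. Fix a complete path-set $P\in\mathcal{P}$. The clean case is when $q$ is entirely contained in some $p\in P$: then $q$ is a subpath of a shortest path, so $w(q)$ equals the distance between $u$ and $v$ in $G$, and in particular $w(q)\leq w(\lambda_{i,j}[u,v])$. In the general case $q$ is covered by several paths of $P$, and one considers the region $R$ bounded by the cycle $q\circ\lambda_{i,j}[u,v]$ (which lies in $\Right_{\lambda_{i,j}}$) and iteratively replaces $R$ with a strictly smaller subregion: each step picks an edge in $\mathring{R}$, selects a path of $P$ containing it, and restricts $R$ to one side of that path via the envelope operations $RE,LE$. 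The iteration terminates at a face of $U$ inside $R$ whose boundary-minus-$\lambda_{i,j}$ is a subpath of a single $p\in P$, yielding the desired weight inequality.

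The delicate point, where I expect the bulk of the work, is ensuring that the face and shortcut arc produced by the iteration actually lie on $\partial f$ rather than on the boundary of some different face inside $R$. I plan to resolve this by choosing $q$ to be \emph{minimal} among the arcs of $\partial f\setminus\lambda_{i,j}$ between consecutive components, in the sense that $R$ contains no further arc of $\partial f$ in its interior; this is possible because $f$ itself lies on the opposite side of $q$ from $R$. With this choice, any face of $U$ inside $R$ sharing an edge with $\partial f$ must do so along a subarc of $q$ itself, so the iteration cannot ``escape'' $\partial f$, and the subarc produced is a subpath of $q\subseteq\partial f$, as required.
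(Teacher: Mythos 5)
Your reduction to the weight inequality $w(q)\leq w(\lambda_{i,j}[u,v])$ and your ``clean case'' (when $q$ lies in a single path of a complete path-set) are fine, but the general case is where the argument breaks, and the proposed minimality patch does not repair it. The region-shrinking iteration of Theorem~\ref{th_shortcut}, applied to the region $R$ bounded by $q\circ\lambda_{i,j}[u,v]$, terminates at \emph{some} face $g\subseteq R$ adjacent to $\lambda_{i,j}$ with $\partial g\setminus\lambda_{i,j}$ contained in a single shortest path; this yields a right shortcut along $\partial g$, but $g$ is in general a face adjacent to $\lambda_{i,j}[u,v]$ deep inside $R$ that shares no edge with $\partial f$ (note $f$ itself lies outside $R$, on the other side of $q$). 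Choosing $q$ minimal only controls where $\partial f$ can appear inside $R$; it does not constrain which face the iteration terminates at, so the claim that ``the subarc produced is a subpath of $q$'' does not follow. Consequently the shortcut you obtain need not be a subpath of $\partial f$, which is exactly the content of Lemma~\ref{lemma:shortcuts} and what Lemma~\ref{lemma:shortcuts_linear} relies on; moreover, a shortcut along $\partial g$ gives no comparison between $w(q)$ and $w(\lambda_{i,j}[u,v])$, so the sentence ``yielding the desired weight inequality'' is a non sequitur.

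The paper's proof avoids introducing any auxiliary face. It fixes the arc $\sigma$ of $\partial f$ between two components of the intersection, takes a complete path-set $P$ and a minimal subfamily $q_1,\ldots,q_r\subseteq P$ covering $\sigma$, and forms the successive right envelopes $\lambda_{i,j}\, RE\, q_1,\ (\cdot)\, RE\, q_2,\ \ldots,\ (\cdot)\, RE\, q_r$. Each right envelope with a shortest path does not increase the weight (a subpath of a shortest path between two intersection vertices is no longer than the corresponding subpath it replaces), and the final path contains $\sigma$; hence $\sigma$ itself is a right shortcut contained in $\partial f$. Your clean case is exactly the $r=1$ instance of this envelope argument; to fix your proof you should run the envelope construction for all $r$ on the arc $q$ directly, rather than shrinking $R$ to a face.
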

\begin{proof}
By hypothesis there exist two edges $e=uv,e'=xy$ of $\partial f$ such that $e,e'\in\lambda_{i,j}$, $u,v,x,y$ appear in this order going from in $\lambda_{i,j}$, and $\lambda_{i,j}[v,x]$ does not intersect on edges $\partial f$. Let $\sigma$ be the subpath of $\partial f$ such that $v,x\in V(\sigma)$ and $u,y\not\in V(\sigma)$. We prove that $\sigma$ is a right shortcut of $\lambda_{i,j}$.

First we observe that $\sigma\subseteq\Right_{\lambda_{i,j}}$ because $\partial f\subseteq\Right_{\lambda_{i,j}}$. Let $P\in\mathcal{P}$ be a complete path-set. Let $\{q_1,q_2,\ldots,q_r\}\subseteq P$ be the minimal subset of paths of $P$ such that $\sigma\subseteq\bigcup_{\ell\in[r]}q_\ell$. Let $\lambda_1=\uuu{\lambda_{i,j}} RE q_1$, $\lambda_2=\uuu{\lambda_1} RE q_2$, $\lambda_3=\uuu{\lambda_2} RE q_3$, $\ldots$, $\lambda_r=\uuu{\lambda_{r-1}} RE q_r$. It is clear that $\sigma\subseteq\lambda_r$, moreover $w(\lambda_r)\leq w(\uuu{\lambda_{i,j}})$ because $\lambda_r$ is obtained from $\uuu{\lambda_{i,j}}$ by right envelopes with shortest paths. Thus $\sigma$ is a shortcut of $\uuu{\lambda_{i,j}}$ and the proof is completed.
\end{proof}

\begin{lemma}\label{lemma:shortcuts_linear}
We can compute all shortcuts of \PIPPO in linear time.
\end{lemma}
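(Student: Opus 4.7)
The plan is to prove the linear bound by charging all shortcut-detection work to face boundaries: the total cost across every iteration is $O\!\left(\sum_{f\in\mathcal{F}_U}|\partial f|\right)=O(|E(U)|)$, with each face inspected at most a constant number of times during the whole execution of \PIPPO.

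First I would precompute, in $O(|E(U)|)$ time, prefix sums of edge weights along the boundary of every face. This makes the weight $w(q)$ of any contiguous subpath $q$ of any $\partial f$ available in $O(1)$ once its endpoints are known, so every candidate shortcut can be tested in constant time. At iteration $i$, having $\lambda_{i,0}$ available (together with the lists $L_{j}$ for $j\in C_i$ supplied by Remark~\ref{remark:lists}), I would sweep along $\lambda_{i,j}$ from $s_i$ to $t_i$ with a cursor. At each vertex $v$ of $\lambda_{i,j}$, I examine the face $f$ lying immediately to the right and walk along $\partial f$ until I meet $\lambda_{i,j}$ again at some vertex $y$, producing a candidate $vy$ subpath $q\subseteq \partial f$. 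Two cases arise: either $\lambda_{i,j}\cap\partial f$ is not a path and Lemma~\ref{lemma:shortcuts} declares $q$ a right shortcut; or it is a path and I accept $q$ iff $w(q)\leq w(\lambda_{i,j}[v,y])$. If $q$ is accepted I set $\lambda_{i,j+1}=\lambda_{i,j}\rtimes q$ and move the cursor to the end of $q$; otherwise the cursor moves past $f$, which is permanently left in $\Right_{\lambda_i}$.

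The main obstacle is the amortized charging that shows each face $f$ is scanned $O(|\partial f|)$ times over the whole algorithm. Within a single iteration $i$, the path $\lambda_{i,j}$ moves only rightward through shortcuts, so after $f$ is either absorbed (in the case of an accepted shortcut, $f$ ends up in $\Left_{\lambda_{i,j+1}}$) or traversed and rejected, the cursor never returns to $f$ during iteration $i$. Across iterations the critical facts are: (i) subsequent iterations $i'>i$ operate in $U_{i'}\subseteq\Right_{\lambda_i}$, so any face absorbed at iteration $i$ cannot be touched later; (ii) by the single-touch property (Theorem~\ref{th:algoritmo(rightmost+redundant)}\ref{item:lambda_i_non_crossing_and_single_touch}) a face cannot lie strictly between two distinct $\lambda_i$ and $\lambda_{i'}$ in a way that forces it to be inspected from the right twice; and (iii) at the initialization of $\lambda_{i,0}$, Lemma~\ref{lemma:left-left} guarantees that the interior faces of the region between $\lambda_{i,0}$ and its children $\lambda_j$, $j\in C_i$, are already fully covered by $\lambda_{i,0}\cup\bigcup_{j\in C_i}\lambda_j$ and therefore do not need to be rescanned. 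Combining these ingredients, each face contributes $O(|\partial f|)$ work in total, and summing over all faces yields the claimed $O(|E(U)|)$ bound.
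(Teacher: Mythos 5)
Your preprocessing (prefix sums of weights along each face boundary, so that the weight of any contiguous piece of $\partial f$ is available in $O(1)$) is exactly the paper's first ingredient, and invoking Lemma~\ref{lemma:shortcuts} to dispose of the disconnected case is also in line with the paper. The gap is in the detection mechanism and its amortization. You discover a candidate shortcut by \emph{walking} along $\partial f$ from the current vertex until you hit $\lambda_{i,j}$ again, so a single inspection of a face costs up to $\Theta(|\partial f|)$, and your bound then rests on the claim that each face is inspected $O(1)$ times over the whole run. That claim is not justified and is false in general: a face $f$ whose candidate shortcut is \emph{rejected} at iteration $i$ stays in $\Right_{\lambda_i}$, and nothing prevents many later paths $\lambda_{i'}$, $i'>i$, from each running along a \emph{different} small portion of $\partial f$ (the single-touch property only constrains pairwise intersections of paths, not how many distinct paths touch one face, and Lemma~\ref{lemma:left-left} only concerns the region between $\lambda_{i,0}$ and its children, not faces left to the right). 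With $\Theta(|\partial f|)$ distinct paths each sharing one edge of $\partial f$, your scheme performs $\Theta(|\partial f|)$ walks of length $\Theta(|\partial f|)$ on that single face, which is quadratic, not linear.

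The paper avoids this by never walking a face boundary at query time. Two observations make each test $O(1)$: first, if $\lambda_{i,0}\cap f$ consists only of edges inherited from a child $\lambda_j$, $j\in C_i$, then $f$ cannot contain a right shortcut (otherwise~\ref{item:lambda_is_rightmost} would fail for $\lambda_j$), so a face is tested only when a \emph{new} edge of $\partial f$ is visited; second, by Lemma~\ref{lemma:shortcuts} the intersection $\lambda_{i,j}\cap\partial f$ may be treated as a single subpath of $\partial f$, so its endpoints (and weight) can be maintained incrementally as new edges arrive, and the only candidate shortcut is its complement on $\partial f$, whose weight is $w(\partial f)-w(\lambda_{i,j}\cap\partial f)$, obtainable in $O(1)$ from your prefix sums. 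The total number of tests is then bounded by the total number of new-edge visits, which is linear (this is the charging that also underlies Theorem~\ref{th:PIPPO's_cost}). To repair your proof you should replace the boundary walk by this incremental bookkeeping of the intersection endpoints per face, and replace your per-face ``inspected $O(1)$ times'' claim by the per-new-edge charging.
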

\begin{proof}
Let $f\in\mathcal{F}$. If $\uuu{\lambda_{i,0}}\cap f\subseteq\uuu{\lambda_j}$, for some $i\in[k]$ and $j\in C_i$, then there is not any right shortcut for $\uuu{\lambda_{i,j}}$ in $f$, otherwise~\ref{item:lambda_is_rightmost} would be denied. Thus in the whole \PIPPO, we ask if there is a right shortcut in $f$ every time a new edge of $f$ is visited. To complete the proof, we have to answer in $O(1)$ every time.

We preprocess $U$ so that for every face $f\in\mathcal{F}$ and for every vertices $u,v \in\partial f$ we can compute the weight of the clockwise path on $\partial f$ from $u$ to $v$ and the weight of the counterclockwise path on $\partial f$ from $u$ to $v$ in $O(1)$. It is sufficient to compute $w(\partial f)$ and the weight of the clockwise path on $\partial f$ from a fixed vertex $u$ to any vertex $v$ in $\partial f$. This preprocessing clearly costs $O(|V(\partial f)|)$ time, for every $f\in\mathcal{F}$, thus it costs linear time for the whole set of faces $\mathcal{F}$.

Now we observe that, by Lemma~\ref{lemma:shortcuts}, $\uuu{\lambda_{i,j}}\cap\partial f$ can be considered always connected when $i$ and $j$ grow; if it is not connected, then we overcome this problem in $O(1)$ by Lemma~\ref{lemma:shortcuts}. Hence we can decide whether $w(\uuu{\lambda_{i,j}}\cap\partial f)\geq w(\partial f\setminus(\uuu{\lambda_{i,j}}\cap \partial f))$ in $O(1)$ time using the above preprocessing.
\end{proof}

%\noindent
%The last paragraph of the previous proof shows why the postorder visit of the genealogy tree is more convenient, in our case, than the visit by levels proposed in~\cite{mitchell}.\todob{Vedi ***}

\subsection{Overall complexity}\label{sub:computational_complexity}

Now we have all necessary results to prove the main result of this section.

\begin{theorem}\label{th:PIPPO's_cost}
\PIPPOMaiuscolo requires linear time.
\end{theorem}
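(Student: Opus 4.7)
The plan is to account for all work performed during the three main stages of the algorithm (preprocessing, initialization of each $\lambda_{i,0}$, and iterative shortcut application), then sum the costs and show each is $O(|E(U)|)=O(n)$.

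First I would handle the preprocessing. Computing $T_g$ and producing the postorder numbering (Line~\ref{line:T_g}) is a standard task that takes linear time on the plane graph $U$; binarization as described in Subsection~\ref{sub:binarization} is likewise linear and only doubles the number of terminal pairs. In addition, I would precompute, for every face $f\in\mathcal{F}_U$, the cumulative boundary weight from a fixed reference vertex, as described in the proof of Lemma~\ref{lemma:shortcuts_linear}; this uses $\sum_f |V(\partial f)|=O(|E(U)|)$ total time and enables $O(1)$-time queries for the weight of any arc along $\partial f$.

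Next I would bound the cost of constructing $\lambda_{i,0}$ at each iteration $i$. Thanks to Lemma~\ref{lemma:left-left} the leftmost $i$-path in $U_i$ consists of edges of $\lambda_j$ and $\lambda_\ell$ (for $j,\ell \in C_i$) glued together with the remaining edges in $\Left_{\lambda_{i,0}}$. By Remark~\ref{remark:lists} the exact attachment points are determined in $O(1)$ from the extremal entries of the lists $L_j$ (and $L_\ell$), so the description of $\lambda_{i,0}$ can be produced without ever materializing $X_i$ explicitly: we keep $L_j$, $L_\ell$ as doubly-linked lists and splice them, in $O(|C_i|)=O(1)$ time, together with the new fragment of $\Left_{\lambda_{i,0}}$ not covered by any child path. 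The cost of reading that new fragment is proportional to the number of fresh edges incorporated into $\lambda_{i,0}$; summed across all $i$ this is $O(\sum_i |E(\lambda_i\setminus X_i)|)$, which is $O(|E(U)|)$ by~\ref{item:output_linear}.

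I would then deal with the while-loop at Lines~\ref{line:shorcut_it_i}--\ref{line:shorcut_it_i_DOPOwhile}. By Lemma~\ref{lemma:shortcuts_linear} each existence-of-shortcut test against a face adjacent to $\lambda_{i,j}$ is $O(1)$, and each face is charged only when a new edge of it is added to some $\lambda_{i,j}$. When a shortcut $\tau\subseteq \partial f$ is applied, we replace a portion of $\lambda_{i,j}$ by the complementary arc along $\partial f$; the newly inserted edges become part of $\lambda_i\setminus X_i$, while the removed edges will never belong to $\lambda_{i'}\setminus X_{i'}$ for any later $i'$, because subsequent $\lambda_{i'}$'s live in $\Right_{\lambda_i}$. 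Updating the attached list $L_{i,j}$ into $L_{i,j+1}$ requires splicing a portion of $\partial f$ into the list, which is done in time proportional to the number of edges of $f$ touched, and each face edge is touched a constant number of times overall. Hence the total work of the while-loop summed over all iterations is $O(\sum_{f}|E(\partial f)|)=O(|E(U)|)$.

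Finally, producing the output $\lambda_i\setminus X_i$ at Line~\ref{line:X_i} is done while building $\lambda_{i,0}$ and while applying shortcuts, so by~\ref{item:output_linear} it contributes $O(|E(U)|)$ in total. Adding the three contributions yields linear overall complexity. The main obstacle I anticipate is showing carefully that the splicing of the lists $L_{i,j}$ during both initialization and shortcut application can be amortized to $O(1)$ per affected edge, so that no face-boundary is reprocessed more than a constant number of times; once that amortization is in place, the remaining bookkeeping is straightforward.
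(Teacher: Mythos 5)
Your proposal is correct and follows essentially the same route as the paper's proof: charging all work to newly visited edges, joining the lists $L_{i,j}$ in $O(1)$ via Remark~\ref{remark:lists}, and invoking Lemma~\ref{lemma:shortcuts_linear} for constant-time shortcut detection, with the total bounded via~\ref{item:output_linear}. Your write-up is somewhat more explicit about the face-boundary weight preprocessing and the amortization of list splicing, but the underlying argument is the same.
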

\begin{proof}
We observe that computing Line~\ref{line:T_g} costs total linear time. Let $i\in[k]$. Finding $\lambda_{i,0}$ costs $O(|\lambda_{i,0}\setminus X_{i,0}|)$ by Remark~\ref{remark:lists}, computing $L_{i,0}$ costs $O(|L_{i,0}\setminus \bigcup_{j\in C_i}L_i|)+O(1)$ because we are working with lists and we join them in $O(1)$ by Remark~\ref{remark:lists}. We obtain $\lambda_i$ from $\lambda_{i,0}$ in $O(|\lambda_i\setminus\lambda_{i,0}|)$ plus the cost of finding all right shortcuts at iteration $i$. Similarly, we obtain $L_i$ from $L_{i,0}$ in $O(|L_i\setminus L_{i,0}|)$ plus the cost of finding all right shortcuts at iteration $i$. By generality of $i$, we have proved that we can find what we need (except for the shortcuts) by visiting every edge a constant number of times. We conclude the proof by Lemma~\ref{lemma:shortcuts_linear}.
\end{proof}

\section{Computing distances and listing paths}
\label{section:compute_distances_and_listing_paths}

By using the implicit representation of paths given by \PIPPO, we compute distances between terminal pairs in linear total time, as showed in the following theorem.

\begin{theorem}
\label{th:distances_in_O(n)}
We compute $w(\lambda_i)$, for all $i\in[k]$, in linear total time.
\end{theorem}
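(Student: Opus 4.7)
My plan is to build a second postorder pass over $T_g$, in the same iteration order used by \PIPPO, that computes $w(\lambda_i)$ at each node $i$ in time $O(|\lambda_i\setminus X_i|+|C_i|)$. Summed with statement~\ref{item:output_linear} and $|C_i|\le 2$ (from the binarization of Section~\ref{sub:binarization}), this gives the claimed linear total time.

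The identity driving the computation is
\[
w(\lambda_i)\;=\;w(\lambda_i\setminus X_i)\;+\;\sum_{j\in C_i} w(\lambda_i\cap\lambda_j),
\]
which makes sense because the single-touch property (statement~\ref{item:lambda_i_non_crossing_and_single_touch}) guarantees that each $\lambda_i\cap\lambda_j$ is a single subpath of $\lambda_j$, and the small-arity case analysis of Remark~\ref{remark:lists} (the four cases of Figure~\ref{fig:lambda}) pins down its endpoints $a_j,b_j$ on $\lambda_j$ in $O(1)$ time from $L_j$. The first term is the sum of edge weights in the explicit output of \PIPPO, costing $O(|\lambda_i\setminus X_i|)$.

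To evaluate the second term in $O(|C_i|)$, I would equip each already-processed path $\lambda_j$ with a prefix-sum function $d_j(v)=w(\lambda_j[s_j,v])$ stored only at the \emph{key} vertices of $\lambda_j$: the endpoints of the edges of $\lambda_j\setminus X_j$ together with the two anchor vertices of each grandchild subpath. This structure is built during iteration $j$ by a single walk along $\lambda_j$: traverse the explicit edges of $\lambda_j\setminus X_j$ accumulating weights, and at each transition into a grandchild piece $\lambda_j\cap\lambda_{j'}$ add the already-computed $w(\lambda_j\cap\lambda_{j'})$ and jump across it in $O(1)$. Building $d_j$ therefore costs $O(|\lambda_j\setminus X_j|+|C_j|)$, and the query $w(\lambda_j[a_j,b_j])=d_j(b_j)-d_j(a_j)$ (up to orientation) resolves in constant time.

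The hard part will be showing that the endpoints $a_j,b_j$ of the \emph{final} $\lambda_i\cap\lambda_j$, after all right shortcuts of iteration $i$ have been applied, really do lie among the stored key vertices of $\lambda_j$. Since $\lambda_j\subseteq\Left_{\lambda_i}$ and right shortcuts can only move $\lambda_i$ farther right, the shared subpath $\lambda_i\cap\lambda_j$ can only shrink monotonically during iteration $i$; combined with single-touch applied to $\lambda_i$ against every grandchild of $j$, this forces $a_j$ and $b_j$ to coincide with vertices incident to an edge of $\lambda_j\setminus X_j$, which are key. Once this invariant is in place, the per-iteration cost $O(|\lambda_i\setminus X_i|+|C_i|)$ sums to $O(|E(U)|)$ by statement~\ref{item:output_linear} together with $k=O(|V(U)|)=O(|E(U)|)$, completing the argument.
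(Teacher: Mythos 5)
Your overall plan---write $w(\lambda_i)$ as the weight of the explicit part $\lambda_i\setminus X_i$ plus the weights of the subpaths shared with the children, and recover the latter from information attached to the children---is the same skeleton as the paper's proof. But the step you yourself flag as ``the hard part'' is not just unproved, it is false, and it is exactly where your $O(1)$-per-query claim rests. The endpoints of $\lambda_i\cap\lambda_j$ need not be incident to an edge of $\lambda_j\setminus X_j$, nor be anchor vertices: $\lambda_i$ can detach from its child $\lambda_j$ at a vertex lying strictly in the interior of $\lambda_j\cap\lambda_{j'}$ for a grandchild $j'$. Concretely, take a unit-weight path $A$--$B$--$C$--$D$, pendant terminals $s_{j'},s_j,s_i$ attached to $A$, $t_{j'}$ attached to $C$, $t_j$ to $D$, and $t_i$ reached from $B$ through an extra vertex $E$ (edges $BE$ and $Et_i$), embedded so that the terminals appear on the external face in the order $s_i,s_j,s_{j'},t_{j'},t_j,t_i$; then $j'\prec j\prec i$, every pair has a unique shortest path, $U$ is their union, and $\lambda_{j'}=s_{j'}ABCt_{j'}$, $\lambda_j=s_jABCDt_j$, $\lambda_i=s_iABEt_i$. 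Here $X_j=\{AB,BC\}$, so your key vertices on $\lambda_j$ are $s_j,A,C,D,t_j$, whereas the query at iteration $i$ needs the prefix sum at $B$, which is interior to $\lambda_j\cap\lambda_{j'}$ and incident to no edge of $\lambda_j\setminus X_j$. Monotone shrinking of $\lambda_i\cap\lambda_j$ under right shortcuts and single-touch against the grandchildren do not exclude this; and nesting the construction deeper pushes the missing prefix sum arbitrarily many levels down, so ``recurse into $d_{j'}$'' is not obviously constant time per query either.

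The paper's proof avoids needing any such invariant. It computes the shared weight by subtraction, $w(\lambda_z[c_z,d_z])=w(\lambda_z)-w(\lambda_z[s_z,c_z])-w(\lambda_z[d_z,t_z])$, and simply walks the prefix $\lambda_z[s_z,c_z]$ and suffix $\lambda_z[d_z,t_z]$ edge by edge, memoizing how far along each $\lambda_z$ these walks have already progressed; linearity then comes from an amortized counting argument based on Lemma~\ref{lemma:a,b} (each edge is traversed only a constant number of times over all iterations), not from an $O(1)$-time query structure. To repair your write-up you must either prove a correct substitute for the key-vertex invariant (the example above rules it out as stated) or fall back on this kind of memoized walking with an amortized charge per edge.
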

\begin{proof}
%We want to prove that every edge is visited at most $O(1)$ time. By~\ref{item:a} and~\ref{item:b} proved in Theorem~\ref{th:algoritmo(rightmost+redundant)}, it suffices to show that we can compute $w(\lambda_i)$ without visiting another time edges visited at iterations $j$, for $j\in C_i$. In this way, every edge is visited at most two times.
%We proceed by induction, proving that if an edge its visited for computing $w(\lambda_i)$, then it is not visited for computing $w(\lambda_j)$, for all $j>i$.
For convenience, given $i\in[k]$ and $j\in C_i$,  we say that $\lambda_j$ is a \emph{child of $\lambda_i$}. By the  binarization of $T_g$ made in Section~\ref{section:computational_complexity_PIPPO} there are three cases: $\lambda_i$ does not intersect on vertices its children, $\lambda_i$ intersects exactly one child, $\lambda_i$ intersects two children. It's clear that if the first case happens, than we compute $w(\lambda_i)$ without visiting edges belonging to its children. Now we explain the second case, the third is similar.

Let us assume that $\lambda_i=p^i\circ\lambda_z[c_z,d_z]\circ q^i$ for some $z\in C_i$ and some (possibly non-empty) paths $p^i$ and $q^i$ that not intersect the children of $\lambda_i$. To compute $w(\lambda_i)$, we first visit $p^i$ and $q^i$ in order to obtain $w(p^i)$ and $w(q^i)$. Then we observe that $w(\lambda_z[c_z,d_z])=w(\lambda_z)-\big(w(\lambda_z[s_z,c_z])+w(\lambda_z[d_z,t_z])\big)$, thus it suffices to visit $\lambda_z[s_z,c_z]$ and $\lambda_z[d_z,t_z]$. Note that if we have already computed $w(\lambda_z[s_z,x])$ (resp., $w(\lambda_z[y,t_z])$)  for some vertex $x\in \lambda_z[s_z,c_z]$ (resp., $y\in \lambda_z[d_z,t_z]$), then we visit only edges in $\lambda_z[x,c_z]$ (resp., $\lambda_z[y,t_z]$).

In this way and edge in $\lambda_i$ is visited at most one time at iteration $i$ and one time in all other iterations $j$, for $i\prec j$. By~\ref{item:b} in Lemma~\ref{lemma:a,b} and generality of $i$, it follows that every edge is visited at most four times.
\end{proof}

Now we study the problem of listing the edges in an $i$-path, for some $i\in[k]$, after the execution of \PIPPO. We want to underline the importance of the single-touch property. In Figure~\ref{fig:zigzag}, in (\subref{fig:zigzag1})  four shortest paths are drawn (the graph is unit-weighted), we observe that the single-touch property is clearly not satisfied. A single-touch version of the previous four paths is drawn in (\subref{fig:zigzag2}); it can be obtained by \PIPPO choosing $i^*=1$ in the genealogy tree. It is clear that the problem of listing the edges in a path in this second case is easier. We stress that in general cases, the union of a set of single-touch paths can form cycles, see Figure~\ref{fig:errore_giappo} for an example.

\begin{figure}[h]
\captionsetup[subfigure]{justification=centering}
\centering
%FIGURA 1
	\begin{subfigure}{5cm}
\begin{overpic}[width=5cm,percent]{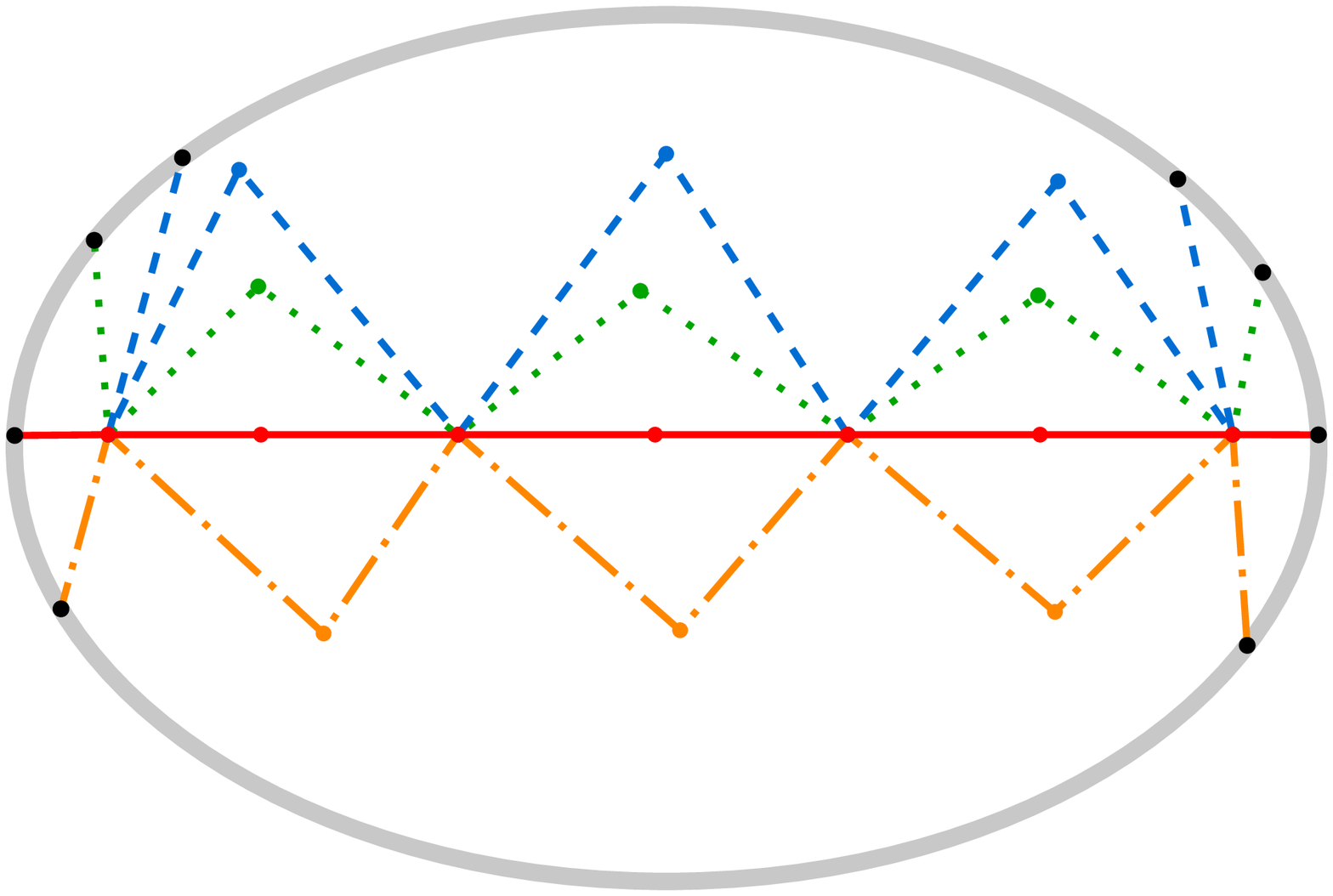}
\put(89.5,55){$s_1$}
\put(7.8,57.2){$t_1$}
\put(96,48.2){$s_2$}
\put(1.3,51.2){$t_2$}
\put(101,34.5){$s_3$}
\put(-6,34.5){$t_3$}
\put(94,15.3){$s_4$}
\put(0.3,16.5){$t_4$}
\end{overpic}
\caption{}\label{fig:zigzag1}
\end{subfigure}
\qquad\qquad
%FIGURA 2
	\begin{subfigure}{5cm}
\begin{overpic}[width=5cm,percent]{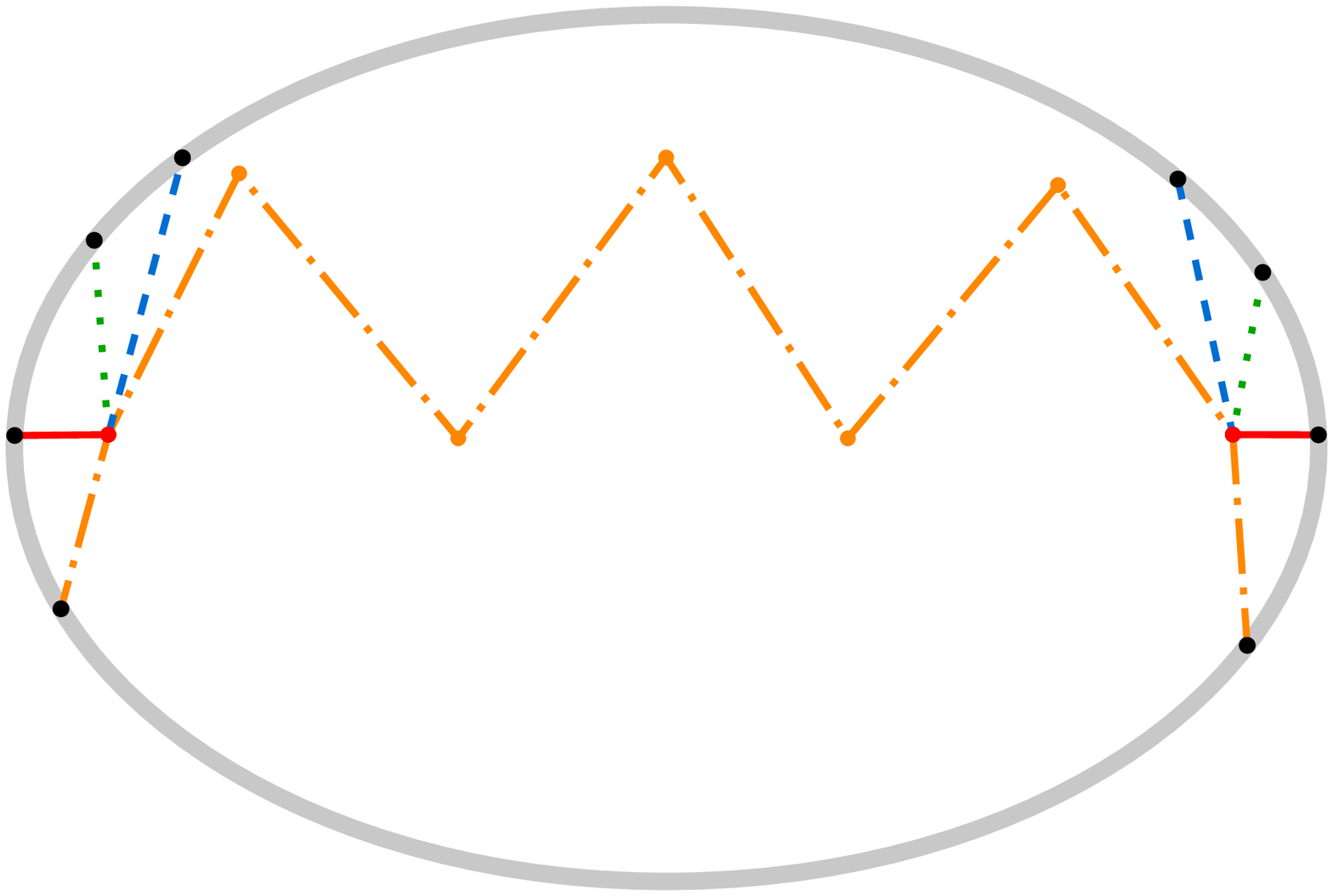}
\put(89.5,55){$s_1$}
\put(7.8,57.2){$t_1$}
\put(96,48.2){$s_2$}
\put(1.3,51.2){$t_2$}
\put(101,34.5){$s_3$}
\put(-6,34.5){$t_3$}
\put(94,15.3){$s_4$}
\put(0.3,16.5){$t_4$}
\end{overpic}
\caption{}\label{fig:zigzag2}
\end{subfigure}
\caption{(\subref{fig:zigzag1}) the union of shortest $i$-paths, for $i\in[4]$, in unit-weighted graph, every different path has different style, (\subref{fig:zigzag2}) the union of $\{\lambda_i\}_{i\in[4]}$, the output paths of \PIPPO, by assuming that $1$ is the root of the genealogy tree.}
\label{fig:zigzag}
\end{figure}

\begin{theorem}\label{th:cost_find_pi_i_from_output}
After $O(n)$ time preprocessing, each shortest path $\lambda_i$, for $i\in[k]$,  can be listed in $O(\max\{\ell_i,$ $\ell_i\log(\frac{k}{\ell_i})\})$ time, where $\ell_i$ is the number of edges of $\lambda_i$.
\end{theorem}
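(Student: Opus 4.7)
The plan is to organise the implicit representation produced by \PIPPO into a compact searchable data structure during an $O(n)$ preprocessing, and then realise the listing of any $\lambda_i$ as an output-sensitive traversal of that structure. First I would decompose each $\lambda_i$ into a constant number (at most five) of maximal \emph{segments} alternating between own segments, i.e.\ maximal contiguous runs of edges in $\lambda_i\setminus X_i$ (already produced by \PIPPO), and shared references of the form $\lambda_z[c_z,d_z]$ with $z\in C_i$. The single-touch property~\ref{item:lambda_i_non_crossing_and_single_touch} together with the binarisation of $T_g$ performed in Subsection~\ref{sub:binarization} guarantees at most two shared references per path, each contiguous inside the corresponding child. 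A single depth-first scan of $T_g$ identifies each $c_z$ and $d_z$ and stores, for every reference, explicit pointers into the child's segment sequence; because there are $O(k)$ segments overall and $O(1)$ pointers per path, this fits in $O(n)$ total time and space.

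Next I would keep the segment list of each $\lambda_i$ in a globally balanced binary structure keyed by position along the path (for instance a finger/search tree over all $O(k)$ segments), so that, given a handle on one segment, the next segment inside a prescribed range can be located in finger-search time. Listing $\lambda_i$ is then a DFS on its segment decomposition: own segments are emitted directly, while a shared reference $\lambda_z[c_z,d_z]$ triggers a recursive DFS that jumps through the precomputed handle to the entry segment of $\lambda_z$, walks along the segments of $\lambda_z$ until it reaches the exit segment for $d_z$, and recurses on any further shared sub-references encountered inside $[c_z,d_z]$. The single-touch property guarantees that at every level the visited segments form a contiguous stretch of the child's segment sequence, so no edge is examined twice.

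For the complexity, write $\ell=\ell_i$. The emission phase costs $O(\ell)$ overall since each edge of $\lambda_i$ is output exactly once. The recursion tree is binary (each $\lambda_j$ has at most two children in the binarised $T_g$) and its leaves correspond bijectively to the own segments that make up $\lambda_i$, so it has $O(\ell)$ internal nodes and hence triggers at most $O(\ell)$ finger-search jumps into children's segment trees. A standard output-sensitive retrieval inequality for balanced trees bounds the total cost of these $\ell$ jumps inside a universe of $O(k)$ segments by $O(\ell\log(k/\ell))$, yielding the claimed $O(\max\{\ell,\ell\log(k/\ell)\})$ total time. The main obstacle is precisely this last step: replacing the naive $\Theta(\log k)$ per lookup by the sharper $O(\log(k/\ell))$ amortised cost. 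This is handled by maintaining finger pointers on the recursion stack so that each successive jump pays only for the progress it makes inside the child's segment sequence, yielding, via the standard convexity argument, the desired $\log(k/\ell)$ factor rather than $\log k$.
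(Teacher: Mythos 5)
There is a genuine gap in the complexity analysis of your recursive listing scheme. Your recursion resolves a shared reference $\lambda_z[c_z,d_z]$ by descending into the child $z$, but inside $z$ that range may itself lie entirely in $X_z$, i.e.\ be entirely shared with a grandchild, which is in turn shared with a great-grandchild, and so on: by Lemma~\ref{lemma:a,b}\ref{item:a} the intersection of $\lambda_i$ with a deep descendant is always routed through the intermediate generations, so nothing prevents long chains of recursion nodes that emit no edges of their own. Your count ``binary recursion tree with leaves in bijection with own segments, hence $O(\ell)$ internal nodes'' is exactly where this breaks: a binary tree with $O(\ell)$ leaves may contain arbitrarily many \emph{unary} internal nodes. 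A concrete bad instance is a ``funnel'': terminal pairs nested as a chain in $T_g$, each $s_i$ attached by one edge to a hub $u$, each $t_i$ attached to a hub $v$, and a single edge $uv$; then every $\lambda_i$ is $s_i u, uv, v t_i$, the edge $uv$ is owned by the deepest pair, and listing $\lambda_1$ (with $\ell_1=3$) forces your recursion through $\Theta(k)$ ancestors, i.e.\ $\Theta(k)$ time against the claimed $O(\log k)$. Nor can you escape by flattening references in the preprocessing so that each piece points directly to its owner: in a chain genealogy where each $\lambda_z$ owns two edges that appear in all of its ancestors, the total size of the flattened decompositions is $\Theta(k^2)$ while $n=\Theta(k)$, so this does not fit in the allowed $O(n)$ preprocessing.

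For comparison, the paper avoids nested dereferencing altogether: during \PIPPO each dart is marked with the index of the iteration at which it is first used, and $\lambda_i$ is listed by walking from $s_i$ and, at each vertex, binary-searching among the outgoing darts (in clockwise order from the incoming dart) for the unique switch from marks $\leq i$ to marks $>i$; the left/right monotonicity of the marks makes this search correct, the single-touch property~\ref{item:lambda_i_non_crossing_and_single_touch} bounds the total degree encountered along $\lambda_i$ by $2k$, and concavity of the logarithm turns the sum of the $O(\log \deg(v))$ search costs into $O(\ell_i\log(2k/\ell_i))$. If you want to salvage your segment-based route, you need an analogous mechanism that jumps directly to the descendant owning the current position in $O(\log(\cdot))$ time (path-compressing the unary chains within the $O(n)$ preprocessing budget), which is precisely the ingredient currently missing.
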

\begin{proof}
For any $i\in[k]$, we denote by $\dd{\lambda_i}$ the oriented version of $\lambda_i$ from $s_i$ to $t_i$. We denote by $\dd{uv}$ the dart from $u$ to $v$ and for every dart $\dd{uv}$ we define $\head(\dd{uv})=v$.

By a small change in \PIPPO in Line~\ref{line:X_i}, we introduce a function \texttt{Mark} that marks a dart $d$ with $i$ if and only if the $d$ is used for the first time in the execution of \PIPPO at iteration $i$. It means that \texttt{Mark}$(d)=i$ if and only if $d$ belongs to $\dd{\lambda_i}$ and $d$ does not belong to $\dd{\lambda_j}$, for all $j\prec i$; this follows from an oriented version of Lemma~\ref{lemma:a,b}, indeed, if $i$ and $j$ are uncomparable, then $\dd{\lambda_i}$ and $\dd{\lambda_j}$ share no darts. This function can be executed within the same time bound of \PIPPO.
%By Line~\ref{line:mark}, after the execution of  \PIPPO, every dart is marked with $i$ if and only if $d\in\lambda_i\setminus\bigcup_{j<i}\lambda_j$. 
Now we explain how to find darts in $\dd{\lambda_i}$.

Let us assume that $(d_1,\ldots,d_{\ell_i})$ are the ordered darts in $\dd{\lambda_i}$. Let $v=\head(d_{j-1})$, and let us assume that $deg(v)=r$ in the graph $\bigcup_{j\in[k]}\lambda_j$. We claim that if we know $d_{j-1}$, then we find $d_{j}$ in $O(\log r)$. First we order the outgoing darts in $v$ in clockwise order starting in $d_{j-1}$, thus let \emph{Out}$_v=(g_1,\ldots,g_r)$ be this ordered set (this order is given by the embedding of the input plane graph). We observe that all darts in \emph{Out}$_v$ that are in $\Left_{\lambda_i}$ are in $\dd{\lambda_w}$ for some $w\leq i$, thus $\texttt{Mark}(d)\leq i$ for all $d\in$ \emph{Out}$_v$ satisfying $d\in\Left_{\lambda_i}$. Similarly, all darts in \emph{Out}$_v$ that are $\Right_{\lambda_i}$ are in $\dd{\lambda_z}$ for some $z\geq i$, thus $\texttt{Mark}(d)\geq i$ for all $d\in$ \emph{Out}$_v$ satisfying $d\in\Right_{\lambda_i}$. Using this observation, we have to find the unique $l\in[r]$ such that $\texttt{Mark}(g_l)\leq i$ and $\texttt{Mark}(g_{l+1})>i$. This can be done in $O(\log r)$ by using a binary search.

Being the $\dd{\lambda_i}$'s pairwise single-touch, then $\sum_{v\in V(\lambda_i)} deg(v)\leq 2k$, where the equality holds if and only if every $\dd{\lambda_j}$, for $j\neq i$, intersects on vertices $\dd{\lambda_i}$ exactly two times, that is the maximum allowed by the single-touch property.

Finally, if $2k\leq\ell_i$, then we list $\dd{\lambda_i}$ in $O(\ell_i)$ because the  binary searches of the correct darts do not require more than $O(k)$ time, otherwise  we note that
$$\sum_{\substack{j=1,\ldots,\ell\\ a_1+\ldots+a_\ell\leq2k}}\log a_j\leq \ell\log\left(\frac{2k}{\ell}\right),$$
so the time complexity follows.
\end{proof}

\section{Conclusions}
\label{section:conclusions}

In this paper we extend the result of Takahashi \emph{et al.}~\cite{giappo2} by computing the lengths of non-crossing shortest paths in  undirected plane graphs also in the general case when the union of shortest paths is not a forest. Moreover, we provide an algorithm for listing the sequence of edges of each path in $O(\max\{\ell,\ell\log(\frac{k}{\ell})\})$, where $\ell$ is the number of edges in the shortest path. Our results are based on the local concept of shortcut.

All results of this paper can be easily applied in a geometric setting, where it is asked to search for paths in polygons instead of plane graphs. The same results can be extended to the case of terminal pairs lying on two distinct faces, by the same argument shown in~\cite{giappo2}.

We left open the problem of listing a shortest path in $O(\ell)$ time and finding the union of non-crossing shortest paths in $o(n\log k)$ time.

\bibliographystyle{siam}% the mandatory bibstyle PER IL lipics!!!!
\bibliography{biblio.bib}

\end{document}